\title{Number Parsing at a Gigabyte per Second}
\author[1\authfn{1}]{Daniel Lemire}
\affil[1]{DOT-Lab Research Center, Universit\'e du Qu\'ebec (TELUQ), Montreal, Quebec, H2S 3L5, Canada}
\runningauthor{Daniel Lemire}
\DeclareMathOperator{\round}{round}
\DeclareMathOperator{\ceiling}{ceiling}
\DeclareMathOperator{\remainder}{remainder}
\DeclareMathOperator{\floor}{floor}
\lstdefinestyle{customc}{%
  belowcaptionskip=1\baselineskip,
  breaklines=true,
  xleftmargin=\parindent,
  language=C,
  showstringspaces=false,
  basicstyle=\small\ttfamily,
  keywordstyle=\bfseries\color{green!40!black},
  numberstyle=\tiny,
  commentstyle=\itshape\color{purple!40!black},
  identifierstyle=\bfseries\color{black},
  stringstyle=\color{orange},
   morekeywords={uint64_t,uint32_t,__m256i,__m128i,simd8,uint8_t,UINT64_C},
}
\lstdefinestyle{custompython}{%
  belowcaptionskip=1\baselineskip,
  breaklines=true,
  xleftmargin=\parindent,
  language=Python,
  showstringspaces=false,
  basicstyle=\small\ttfamily,
  keywordstyle=\bfseries\color{green!40!black},
  numberstyle=\tiny,
  commentstyle=\itshape\color{purple!40!black},
  identifierstyle=\bfseries\color{black},
  stringstyle=\color{orange},
   morekeywords={uint64_t,uint32_t,__m256i,__m128i,simd8,uint8_t,UINT64_C},
}
\newtheorem{techremark}{Technical Remark}
\newtheorem{numberedexample}{Example}
\algnewcommand{\IIf}[1]{\State\algorithmicif\ #1\ \algorithmicthen}
\algnewcommand{\EndIIf}{\unskip\ \algorithmicend\ \algorithmicif}
\begin{document}

\maketitle
\begin{abstract}
With disks and networks providing gigabytes per second, parsing decimal numbers  from strings becomes a bottleneck. 
We consider the problem of parsing decimal numbers 
to the nearest binary floating-point value. 
The general problem requires variable-precision arithmetic.
However, we need at most 17~digits to represent 64-bit standard floating-point numbers (IEEE~754). Thus we can  represent the
decimal significand  with a single 64-bit word.  
By combining the significand and precomputed tables, 
we can compute the nearest floating-point number using as few as one or two 64-bit multiplications. 

 Our implementation can be several times faster than conventional functions present in standard C libraries on modern 64-bit systems (Intel, AMD, ARM and POWER9). 
Our work is available as open source software used by major systems such as Apache~Arrow and Yandex~ClickHouse. The Go standard library has adopted a version of our approach.

\keywords{Parsing, Software Performance,  IEEE-754, Floating-Point Numbers}

\end{abstract}

\section{Introduction}




Computers approximate real numbers as binary IEEE-754 floating-point numbers: an integer $m$ (the \emph{significand}\footnote{The use of the term \emph{mantissa} is discouraged  by IEEE~\cite{893287,cpe5418} and by Knuth~\cite{knuth2014art}.}) multiplied by 2 raised to an integer exponent $p$: $m \times 2^p$.
Most programming languages have a corresponding 64-bit data type and commodity processors provide
the corresponding instructions.  In  several mainstream programming languages (C, C++, Swift, Rust, Julia, C\#, Go), floating-point numbers adopt the 64-bit floating-point type by default. In JavaScript, all numbers are  represented using a 64-bit floating-point type, including integers---except maybe for the large integer type \texttt{BigInt}. 
There are other number types  beyond the standard binary IEEE-754 number types. For example,  Gustafson~\cite{gustafson2017end} has proposed Unums types, Microsoft promotes its Microsoft Floating Point (MSFP) types~\cite{darvish2020pushing}  and many programming languages support decimal-number data types~\cite{cowlishaw2001decimal}. However, they are not as ubiquitous.



Numbers are frequently serialized on disk or over a network as ASCII strings representing the value in decimal form (e.g., \texttt{3.1416}, \texttt{1.0e10}, \texttt{0.1}). 
It is generally impossible to find a binary IEEE-754   floating-point  number that matches exactly a decimal number. For example, the number \texttt{0.2} corresponding to $1/5$
can never be represented exactly as a binary floating-point number: its binary representation requires an infinite number of digits. Thus we must find the nearest available binary floating-point number. The nearest approximation to \texttt{0.2} using a standard 64-bit floating-point 
value is $\num{7205759403792794}\times    2 ^ {-55}$ or approximately $\num{0.20000000000000001110}$. The second nearest floating-point value is
 $\num{7205759403792793}\times    2 ^ {-55}$ or approximately $\num{0.19999999999999998335}$. 
  In rare cases, the decimal value would be exactly between two floating-point values. In such cases, the convention is that we \emph{round ties to even}: of the two nearest floating-point values, we choose the one with an even \emph{significand}. Thus, since \num{9000000000000000.5} falls at equal distance from  $\num{9000000000000000}\times    2 ^ {0}$ and 
$\num{9000000000000001}\times    2 ^ {0}$, we round it to $\num{9000000000000000}$. Meanwhile we round \num{9000000000000001.5} and \num{9000000000000002.5}  to \num{9000000000000002} and so forth.
 
 Finding the binary floating-point value that is closest to a decimal string can be computationally challenging.  
 Widely used number parsers fail to reach \SI{200}{\mebi\byte\per\second} on commodity processors (see Fig.~\ref{fig:serializedspeed}) whereas our disks and networks are capable of transmitting data at gigabytes per second (more than 5~times as fast).

If we write a 64-bit floating-point number as a string using a decimal significand with 17~significant digits, we can always parse it back exactly: starting from the ASCII string representing the number with a correctly rounded 17-digit decimal significand  and picking the nearest floating-point number, we retrieve our original number.  
Programmers and data analysts use no more than 17~digits in practice since there is no benefit to the superfluous digits if the number is originally represented as a standard binary  floating-point number. There are only some exceptional cases where more digits could be expected: e.g., 
\begin{enumerate}
\item when the value has been entered by a human being,
\item when the value was originally computed using higher-accuracy arithmetic,
\item when the original value was in a different number type,
\item when a system was poorly designed.
\end{enumerate}

We have that 64-bit unsigned integers can represent
all 19-digit non-negative integers since $10^{19} <2^{64}$. Given an ASCII string (e.g., \texttt{2.2250738585072019e-308}), we can 
 parse the decimal significand as a 64-bit integer and represent the number as $22250738585072019\times 10^{-308-16}$. It remains to convert it to a binary floating-point number. 
In this instance, we must divide $22250738585072019$ by $10^{308+16}$ and round the result: we show that we can solve such problems using an efficient algorithm (\S~\ref{sec:fastalgo}).
Further, when more than 19~digits are found, we may often be able to determine the nearest floating-point value from the most significant 19~digits (\S~\ref{sec:long}). 


\begin{table}[!tb]
    \centering
    \begin{tabular}{p{0.40\textwidth}p{0.45\textwidth}}
    \toprule
    term & notation\\
    \midrule     
    $m$ & binary significand; $m$ is a non-negative integer (often in   $[2^{52},2^{53})$)\\[0.2em]
    $p$ & binary exponent; $p$ is an integer, it can be negative\\[0.2em]
    non-negative floating-point number     & $m \times 2^p$ \\[0.2em]
        $w$ & decimal significand; $w$ is a non-negative integer (often in   $[0,2^{64})$) \\[0.2em]
    $q$ & decimal exponent; $q$ is an integer, it can be negative \\[0.2em]
     non-negative decimal number     & $w \times 10^q$\\[0.2em]
        rounding                     &     $\round(x)$ is an integer value nearest to $x $, with ties broken by rounding to the nearest even integer\\[0.2em]
        ceiling                     &    given $x\geq 0$,  $\ceiling(x)$ is the smallest  integer  no smaller than $x $\\[0.2em]
        floor                     &     given $x\geq 0$, $\floor(x)$ is the largest  integer  no larger than $x $\\[0.2em]
        integer quotient                     &    given non-negative integers $n,m$,  $n \div m = \floor (n /m)$ \\[0.2em] 
        integer remainder                     &     $\remainder(n , m)= n - m \times \floor (n /m)$ \\[0.2em] 
        trailing zeros & the positive integer $x$ has $k\in \mathbb{N}$ trailing zeros if and only if $x$ is divisible by $2^k$  \\
    \bottomrule
    \end{tabular}
    \caption{Notational conventions}
    \label{tab:notation}
\end{table}

%
Our main contribution is to show that we can reach high parsing speeds (e.g.,  \SI{1}{\gibi\byte\per\second}) on current 64-bit processors without sacrificing accuracy
by focusing on optimizing the common number-parsing scenario where we have no more than 19~digits. 
We make all of our software freely available.



\section{IEEE-754 Binary Floating-Point Numbers}

Many of the most popular programming languages and many of the most common processors support 64-bit and 32-bit  IEEE-754 binary floating-point numbers. See Table~\ref{tab:commmonieee}.
The IEEE-754 standard defines other  binary floating-point types (binary16, binary128 and binary256) but they are less common.

According to the IEEE-754 standard, a positive  \emph{normal} double-precision floating-point  number is a binary floating-point number where the 53-bit integer $m$ (the \emph{significand}) is in the interval $[2^{52},2^{53})$ while being interpreted as a number in $[1,2)$ by virtually dividing it by $2^{52}$, and 
where the 11-bit exponent $p$ ranges 
from \num{-1022} to
\num{+1023}~\cite{10.1145/103162.103163}. Such a double-precision number can represent all values between $2^{-1022}$ and up to but not including $2^{1024}$; these are the positive \emph{normal} values. Some values smaller than $2^{-1022}$ can be represented and are called \emph{subnormal} values:  they use a special exponent code which has the value $2^{-1022}$ and the significand  is then interpreted as a value in $[0,1)$.
A sign bit distinguishes negative and positive values.  A double-precision floating-point  number  uses 64~bits and is often called binary64.
The binary64 format can represent all decimal numbers made of a 15-digit significand from $\approx -1.8 \times 10^{308} $ to $\approx 1.8 \times 10^{308} $---except for the subnormal range ($\approx -5 \times 10^{-324} $ to $\approx 5 \times 10^{-324} $). Importantly, the reverse is not true: it is not sufficient to have 15~digits of precision to distinguish any two floating-point numbers: we may need up to 17~digits.


The single-precision floating-point  numbers are similar but span 32~bits (binary32). They are binary floating-point numbers where the 24-bit significand $m$ is in the interval $[2^{23},2^{24})$---considered as  value in $[1,2)$ after virtually dividing it by $2^{23}$---and 
where the 8-bit exponent $p$ ranges 
from \num{-126} to \num{+127}. We can represent all numbers between $2^{-126}$ up to, but not including, $2^{128}$; with special handling for some numbers smaller than $2^{-126}$ (subnormals).
The binary32 type can represent all decimal numbers made of a 6-digit significand.
If we serialize a 32-bit number using 9~digits, we can always parse it back exactly.

\begin{table}[!tb]
    \centering
    \begin{tabular}{cccc}
    \toprule
    name  & exponent bits & significand (stored) & decimal digits (exact)  \\\midrule
    
    binary64 &  11~bits & 53~bits (52~bits) & 15 (17)\\
    binary32 & 8~bits & 24~bits (23~bits) & 6 (9)\\
    \bottomrule
    \end{tabular}
    \caption{Common IEEE-754 binary floating-point numbers: 64~bits (binary64) and 32~bits (binary32). A single bit is reserved for the sign in all cases.}
    \label{tab:commmonieee}
\end{table}

\section{Related Work}
\label{sec:relatedwork}

Clinger~\cite{10.1145/93548.93557,10.1145/989393.989430} describes accurate decimal to binary conversion; he proposes a fast path using the fact that small powers of 10 can be represented exactly as floats. Indeed,  if we seek to convert the decimal number $1245\times 10^{14}$ to a binary floating-point number, we   observe  
that the number $10^{14}$ can be represented exactly as a 64-bit floating-point number because $10^{14} = 5^{14} \times 2^{14}$, and $5^{14} < 2^{53}$. Of course, the significand ($1245$) can also be exactly represented. Thus if the value $10^{14}$ is precomputed as an exact floating-point value, it remains to compute the product of $1245\times 10^{14}$.
The IEEE-754 specification requires that the result of an elementary arithmetic operation is correctly rounded.\footnote{Mainstream commodity processors (e.g., x64 and 64-bit ARM) have fast floating-point instructions with correct 64-bit and 32-bit rounding.} 
In this manner, we can immediately convert a decimal number to a 64-bit floating-point number  when it can be written as $w \times 10^{q}$ with  $-22\leq q\leq 22$ and  $w\leq 2^{53}$. We can extend Clinger's fast approach to 32-bit floating-point numbers with the conditions $-10\leq q\leq 10$ and $w\leq 2^{24}$.

Gay~\cite{gay1990correctly} improves upon Clinger's work in many respects. He provides a secondary fast path, for slightly larger powers. Indeed if we need to compute
 $w \times 10^{q}$ for some small integer $w$  for some integer $q>22$, we may   
decompose the problem as  $(w \times 10^{q-22}) \times 10^{22}$. If $q$ is  sufficiently small  so that
 $w \times 10^{q-22}$ is  less than $2^{53}$, then the computation is still  exact, and thus $(w \times 10^{q-22}) \times 10^{22}$ is also  exact.  Unfortunately, this approach is limited to decimal exponents $q\in (22,22+16)$ since
 $10^{16}>2^{53}$.
Gay contributes  a fast general decimal-to-binary implementation that is still in wide use: we benchmark against its most recent implementation in \S~\ref{sec:experiments}. The general strategy for decimal-to-binary conversion involves first finding quickly a close approximation, that is within a few floating-point numbers of the accurate value, and then to refine it using one or two more steps involving exact big-integer arithmetic.
Though there has been many practical attempts at optimizing number parsing~\cite{abseil}, we are not aware of improved follow-up work to Gay's approach in the scientific literature. 
%

A tangential problem is the conversion of binary floating-point numbers to decimal strings; the inverse of the problem that we are considering.  The binary-to-decimal problem has received much attention~\cite{10.1145/3192366.3192369,10.1145/3360595,10.1145/2837614.2837654,10.1145/249069.231397,10.1145/1806596.1806623,10.1145/989393.989431}.
Among other problems is the one of representing a  floating-point number using as few digits as possible so that the exact original value can be retrieved~\cite{10.1145/3360595}: we need between 1 and 17~digits.

\section{Parsing the String}
\label{sec:parsestring}

A  floating-point value may be encoded in different manners as a string. For example, \texttt{1e+1}, \texttt{10}, \texttt{10.0}, \texttt{10.}, \texttt{1.e1}, \texttt{+1e1}, \texttt{10E-01}  all represent the same value (10). There are different conventions and rules. For example, in JSON~\cite{rfc8259}, the following strings are invalid numbers: +1, 01, 1.e1. Furthermore, there are locale-specific conventions.

When parsing decimal numbers, our first step is to convert the string into a significand and an exponent.  Though details differ depending on the requirement, the general strategy we propose is as follows:

\begin{enumerate}

\item The string may  be explicitly delimited---we have the end point or a string length---or we may use a sentinel such as the null character.  The parser must not access characters outside the string range to avoid memory errors and security issues. 
\item It may be necessary to skip all leading white-space characters. In general, what constitutes a white-space character is locale-specific.
\item The number may begin with the `+' or the `-' character. Some formats may disallow the leading `+' character.
\item The significand is a sequence of digits (0,\ldots,9) containing optionally the decimal separator: the period character~`.' or a locale-specific equivalent. We must check that at least one digit was encountered: we thus forbid length-zero significands and significands made solely of the decimal separator. Some formats like JSON disallow a leading zero (only the zero value may begin with a zero) or an empty integer component (there must be digits before the decimal separator) or an empty fractional component (there must be digits after the decimal separator). To compute the significand, we may use a 64-bit unsigned integer $w$. We also record the beginning of the significand (e.g., as a pointer). We compute the digit value from the character using integer arithmetic: the digits have consecutive code point values in the most  popular character encodings (Unicode and ASCII define values from 48 to 57). We can similarly use the fact that the digits occupy consecutive code points to quickly check whether a character is a digit.
With each digit encountered, we can compute the running significand with a multiplication by ten followed by an addition ($w = 10 \times d  + v$ where $v$ is the digit value). The multiplication by ten is often optimized by compilers into efficient sequences of instructions. If there are too many digits, the significand may overflow 
which we can guard against by counting the number of processed digits: it is not necessary to guard each addition and multiplication against overflows. We can either be optimistic and later check whether an overflow was possible, or else we may check our position in the string, making sure that we never parse more than 19~digits, after omitting leading zeros. When the decimal separator is encountered, we record its position, but we otherwise continue computing the running significand. 
It is common to encounter many digits after the decimal separator. Instead of processing the digits one by one, we may check all at once whether a sequence of 8~digits is available and then update a single time the running significand---using a technique called SIMD within a register (SWAR)~\cite{fisher1998compiling}.  See Appendix~\ref{appendix:parsing-string}. If we find a sequence of eight digits, it can be beneficial to check again whether eight more digits (for a total of 16) can be found.

\item 
If there is a decimal separator, we must record the number of fractional digits, which we compute from the position of the decimal separator  and the end of the significand. E.g., if there are 12 digits after the decimal separator, then the exponent is $10^{-12}$. If there is no decimal separator, then the exponent is implicitly zero ($10^0$).
\item The significand may be followed by the letter `e' or the letter `E' in which case we need to parse the exponent if the scientific notation is allowed. Conversely, if the scientific notation is prescribed, we might fail if the exponent character is not detected. The parsing of the exponent proceeds much like the parsing of the significand except that no decimal separator is allowed.
An exceptional condition may occur if the exponent character is not followed by digits, accounting for the possible `+' and `-' characters. We may either fail, if the scientific notation is required, or we may decide to truncate the string right before the exponent character.
To avoid overflow with the exponent, we may  update it only if its absolute value is under some threshold: it makes no difference whether the exponent is \num{-1000} or \num{-10000}; whether it is \num{1000} or \num{10000}. 
The explicit exponent must be added to the exponent computed from the decimal separator. 
\item If the number of digits used to express the significand is less than 19, then we know that the significand cannot overflow. If the number of digits is more than 19, we may count the significant digits by omitting leading  zeros (e.g., \num{0.000123} has only three significant digits). Finally, if an overflow cannot be dismissed, we may need to parse using a higher-precision code path (\S~\ref{sec:long}).
\end{enumerate}

There are instances when we can quickly terminate the computation after decoding the 
decimal significand and its exponent. If the significand is zero or the exponent is very small then the number must be zero. If the significand is non-zero but the exponent is very large then we have an infinite value ($\pm \infty$).


\section{Fast Algorithm}
\label{sec:fastalgo}

A fast algorithm to parse floating-point numbers might start  by processing
the ASCII string (see \S~\ref{sec:parsestring}) to find a decimal significand and a
decimal exponent. If the number of digits in the  significand  is less than 19, then
our approach is applicable. 
(see \S~\ref{sec:long}).
However, before we apply our algorithm, 
 we use Clinger's  fast path~\cite{10.1145/93548.93557,10.1145/989393.989430}, see \S~\ref{sec:relatedwork}. Even though it adds an additional branch at the beginning, it is an inexpensive code path when it is applicable, implying a single floating-point multiplication or division. We can implement it efficiently. We check whether the decimal power is within the allowable interval ($q\in [-22,22]$ in the 64-bit case, $q\in [-10,10]$ in the 32-bit case) and whether the absolute value of the decimal significand is in the allowable interval ($[0,2^{53}]$ in the 64-bit case or $[0,2^{24}]$ in the 32-bit case). When these conditions are encountered, we losslessly convert the decimal significand to a floating-point value, we lookup the precomputed power of ten $10^{|q|}$ and we multiply (when $q 
\geq 0$) or divide ($q < 0$) the converted significand.\footnote{The case with negative exponents where a division is needed requires some  care on systems where the division of two floating-point numbers is not guaranteed to round the nearest floating-point value: when such a system is detected, we may  limit the fast path to positive decimal powers.}  Gay~\cite{gay1990correctly} proposes an extended fast path that covers a broader range of decimal exponents, but with more stringent conditions on the significand. We do not make use of this secondary fast path. It adds additional branching and complexity for relatively little gain in our context.

In particular, Clinger's fast path covers all integer values in $[0, 2^{53}]$ (64-bit case). We could also add an additional fast path specifically for integers.  We can readily identify such cases because the decimal exponent is zero: $w\times 10^0$. We can rely on the fact that the IEEE standard specifies that conversion between integer and floating-point be correctly rounded~\cite{10.1145/103162.103163}. Thus a cast from an integer value to a floating-point value is often all that is needed. It may often require nearly just a single instruction (e.g., \texttt{cvtsi2sd} under x64 and \texttt{ucvtf} under ARM). However, we choose to disregard this potential optimization because the gains are  modest while it increases the complexity of the code.

We must then handle the general case, after the application of Clinger's fast path.
We formalize our approach with Algorithm~\ref{algo:fancytotalalgo}. This concise algorithm can handle rounding, including ties to even, subnormal numbers  and infinite values. We specialize the code for positive numbers, but negative numbers are handled by flipping the sign bit in the result. As the pseudo-code suggests, it can be implemented in a few lines of code. The algorithm always succeeds unless we have  large or  small decimal
exponents ($q \notin [-27,55]$) in which case we may need to fall back on a higher-precision approach in uncommon instances. The algorithm relies on a precomputed table of 128-bit values $T[q]$ for decimal exponents $q\in [-342,308]$ (see Appendix~\ref{appendix:table}).
\begin{itemize}
    \item In lines~\ref{line:guard1}~and~\ref{line:guard2}, we check for very large or very small decimal exponents as well as for zero decimal significands. In such cases, the result is always either zero or infinity.
    \item In lines~\ref{line:norm1}~and~\ref{line:norm2}, we normalize the decimal significand $w$ by shifting it so that  $w\in [2^{63}, 2^{64})$.
    \item We must convert the decimal significand $w$ into the binary significand $m$.
    We have that $w \times 10^q = w \times 5^q \times 2^q   \approx m \times 2^p$ so we must estimate $ w \times 5^q$. At line~\ref{line:multiplication}, we multiply the normalized significand $w$ by the 128-bit value $T[q]$ using one or two 64-bit multiplications. Intuitively, the product $w \times T[q]$  approximates $w \times 5^q$ after shifting the result. We describe this step in \S~\ref{sec:mostsig} and \S~\ref{sec:pospower} for positive decimal exponents ($q\geq 0$), and in \S~\ref{sec:divisionpowerfive} for negative decimal exponents. We have a 128-bit result $z$.
    \item At line~\ref{line:failure}, we check for failure,  requiring the software to fall back on a higher-precision approach. It corresponds to the case where we failed to provably approximate $ w \times 5^q$ to a sufficient degree. In practice, it is unlikely and only ever possible if $q\notin [-27,55]$.
    \item At line~\ref{line:binarysignificand}, we compute the expected binary significand with one extra bit of precision (for rounding) from the product $z$.
    \item At lines~\ref{line:binaryexponent1}~and~\ref{line:binaryexponent2}, we compute the expected binary exponent. We justify this step in \S~\ref{sec:expo}.
    \item At line~\ref{line:binaryexponenttoosmall}, we check whether the binary exponent is too small. When it is too small, the result is zero.
    \item At line~\ref{line:subnormal}, we check whether we have a subnormal value when the binary exponent is too small. See \S~\ref{sec:subnormals}.
    \item At line~\ref{line:smallerthanone}, we handle the case where we might have a value that is exactly between two binary floating-point numbers. We describe this step generally in \S~\ref{sec:roundtoeven} where we show that subnormal values cannot require rounding ties. We describe it specifically in \S~\ref{sec:roundevenpositive} for the positive-exponent case ($q\geq 0$) and in \S~\ref{sec:nearone} in the negative-exponent case. 
    Intuitively, we identify ties when the product $z$ from which we extracted the binary significand $m$ contains many trailing zeroes after ignoring the least significant bit. 
    We need to be concerned  when we would (at line~\ref{line:round}) round up from an even value: we adjust the value to prevent rounding up. 
    \item At line~\ref{line:round}, we round the binary significand. At line~\ref{line:roundoverflow}, we handle the case where rounding up caused an overflow, in which case we need to increment the binary exponent. At line~\ref{line:inf}, we handle the case where the binary exponent is too large and we have an infinite value.
\end{itemize}
We show that the algorithm is correct by examining each step in the following sections. We assess our algorithm experimentally in \S~\ref{sec:experiments}.

\begin{algorithm}[!tbh]
\begin{algorithmic}[1]
\Require  an integer $w \in [0,10^{19}]$ and an integer exponent $q$
\Require  a table $T$ containing 128-bit reciprocals and truncated powers of five  for all powers from $-342$ to $308$ (see Appendix~\ref{appendix:table})
\IIf{$w = 0$ or $q < -342$\label{line:guard1}} \textbf{Return} 0 \EndIIf
\IIf{$q > 308$\label{line:guard2}} \textbf{Return} $\infty$ \EndIIf
\State \label{line:norm1}$l \leftarrow$ the number of leading zeros of $w$ as a 64-bit (unsigned) word
\State \label{line:norm2}$w \leftarrow 2^l \times w$ \Comment{Normalize the decimal significand}
\State  \label{line:multiplication} Compute the 128-bit truncated product  $z \leftarrow (T[q] \times w) \div 2^{64}$, stopping after one 64-bit multiplication if the most significant 55~bits (64-bit) or 26~bits (32-bit)  are provably exact.
\IIf{$z \bmod 2^{64} = 2^{64}-1$ and $q \notin [-27,55]$ \label{line:failure} } \textbf{Abort} \EndIIf{}
\State \label{line:binarysignificand}$m \leftarrow$ the most significant 54~bits (64-bit) or 25~bits (32-bit) of the product $z$, not counting the eventual leading zero bit
\State \label{line:binaryexponent1}$u\leftarrow z \div 2^{127}$ value of the most significant bit of $z$
\State \label{line:binaryexponent2}$p \leftarrow ((217706 \times q) \div 2^{16}) + 63 - l + u$ \Comment{Expected binary exponent}
\IIf{$p \leq -1022-64$ (64-bit) or $p \leq -126-64$ (32-bit)\label{line:binaryexponenttoosmall}}
\textbf{Return} 0
\EndIIf
\If{\label{line:subnormal}$p \leq -1022$ (64-bit) or $p \leq -126$ (32-bit)}\Comment{Subnormals}
\State $s\leftarrow -1022 - p +1$ (64-bit) or $s\leftarrow -126 - p +1$ (32-bit)
\State $m \leftarrow m \div 2^s$ and 
 $m \leftarrow m + 1$ if $m$ is odd (round up), and $m \leftarrow m \div 2$
\State \textbf{Return} $m \times 2^{p} \times 2^{-52}$ (64-bit) or  $m \times 2^{p} \times 2^{-23}$ (32-bit case)
\EndIf
\If{\label{line:smallerthanone}$z \bmod 2^{64} \leq 1$ and $m$ is odd and $m\div 2$ is even and ($q \in [-4,23]$ (64-bit) or $q \in [-17,10]$ (32-bit))} \Comment{Round ties to even}
\State \textbf{if} $(z \div 2^{64})/m$ is a power of two \textbf{then}  $m\leftarrow m-1$ \Comment{Will not round up}
\EndIf
\State \label{line:round} $m \leftarrow m + 1$ if $m$ is odd; followed by 
 $m \leftarrow m \div 2$ \Comment{Round the binary signficand}
\IIf{\label{line:roundoverflow}$m = 2^{53}$ (64-bit) or $m = 2^{24}$ (32-bit)}
$m = m \div 2$; $p \leftarrow p + 1$
\EndIIf
\IIf{\label{line:inf}$p>1023$ (64-bit) or $p>127$ (32-bit)}
\textbf{Return} $\infty$
\EndIIf
\State \textbf{Return} $m \times 2^{p} \times 2^{-52}$ (64-bit) or  $m \times 2^{p} \times 2^{-23}$ (32-bit case)
\end{algorithmic}
\caption{%
Algorithm to compute the binary floating-point number nearest to a decimal floating-point number $w \times 10^q$. We give just one algorithm for both the 32-bit and 64-bit cases. For negative integers, we need to  negate the result.\label{algo:fancytotalalgo}}
\end{algorithm}

\section{Exact Numbers and Ties}
\label{sec:roundtoeven}

We seek to approximate a decimal floating-point number  of the form $w \times 10^q$ using a binary decimal floating-point number of the form $m \times 2^p$. Sometimes, there is no need to approximate since an exact representation is possible.
That is, we have that  $w \times 10^q = m \times 2^p$ or, equivalently, $w \times 5^q \times 2^q = m \times 2^p$. In our context, we refer to these numbers as \emph{exact numbers}. We seek to better identify when they can occur.

\begin{itemize}
\item When $q\geq 0$,  we  have  $m = w \times 5^q \times 2^q \times 2^ {-p}$ so that $m$ is divisible by $5^q$.
In the 64-bit case, we have that $m<2^{53}$; and in the 32-bit case, we have that $m<2^{24}$. Thus we have, respectively, $5^q <2^{53}$ and $5^q <2^{24}$. These inequalities become $q\leq 22$ and $q\leq 10$. For example, we have that $1 \times 10^{22}$ is an exact 64-bit number while $1 \times 10^{23}$ is not. 
\item When $q<0$, we have that $w = 5^{-q} \times 2^{-q}  \times 2^{p}  \times m$. We have that $5^{-q}$ divides $w$. If we assume that $w<2^{64}$ then we have that
$5^{-q} < 2^{64}$
or $q\geq -27 $. For example, the number $7450580596923828125 \times 10 ^{-27}$ is the smallest  exact 64-bit number.   It follows that no exact number is sufficiently small to qualify as a subnormal value: the largest subnormal number  has a small decimal power (e.g., $\approx 10^{-38}$ in the 32-bit case). 

\end{itemize}
Thus we have that exact numbers must be of the form $w \times 10^q$ with $q\in [-27, 22]$ (64-bit case) or $q\in [-27, 10]$ (32-bit case) subject to the constraint that the decimal significand can be stored in a 64-bit value. Yet floating-point numbers occupy a much wider range (e.g., from $4.9\times 10 ^{-324}$ to $1.8\times 10 ^{308} $). In other words, exact numbers are only possible when the decimal exponent is near zero.

%


To find  the nearest floating-point number when parsing, it is almost always sufficient to round
to the nearest value without  an exact computation. However, when the number
we are parsing might fall exactly between two numbers, more care is needed. The IEEE-754 standard recommends that
we round to even. We may need an exact
computation to apply the round-ties-to-even strategy.\footnote{We focus solely on 
rounding ties to even, as it is  ubiquitous. However, our approach could be extended to other rounding modes.}
The sign can be ignored when rounding ties to even: if a value is exactly between the two nearest floating-point numbers and they have different signs, then the  midpoint value must be zero, by symmetry.

It may seem that we could generate many cases where we fall exactly between
two floating-point numbers. Indeed, it suffices to take any floating-point number that is not the largest one, and then take the next largest floating-point number. From these two numbers, we pick a number that is right in-between and we have a number that requires rounding to even. However, for such a half-way number to be a concern to us, it must be represented exactly in decimal form using a small number of decimal digits.
In particular,  the decimal significand must be divisible by $5^{-q}$ and yet must be no larger than $2^{64}$.
It implies that the decimal exponent cannot be too small ($q\geq -27$). It also  implies that the nearby binary floating-point values are normal numbers. 

Let us formalize the analysis.
A mid-point between two floating-point numbers,
$m \times 2^{p}$ and $(m+1) \times 2^{p}$,
can be written as $(2m+1) \times 2^{p-1}$. 
Assume that both numbers $m \times 2^{p}$ and $(m+1) \times 2^{p}$ can be represented exactly using a standard floating-point type:
for 64-bit floating point numbers, it implies that $m + 1<2^{53} $; for 32-bit numbers   it implies $m + 1 <2^{24}$.
We only need rounding if  $(2m+1) \times 2^{p-1}$ cannot be represented.
There are two reasons that might explain why a number cannot be represented. Either it requires a power of two that is too small or too large, or else its significand requires too many bits.
Because the value is exactly between two  numbers that can be represented, we know that it is not outside the bounds of the power of two. Thus the significand must require too many bits. Furthermore, both $m \times 2^{p}$ and $(m+1) \times 2^{p}$ must be normal numbers.
Hence the fact that $2m + 1$ has too many bits implies that $2m + 1 \in (2^{53}, 2^{54}] $ for 64-bit floating point numbers and that $2m + 1 \in (2^{24}, 2^{25}]$ for 32-bit numbers.



\begin{itemize}
\item When $q\geq 0$,  we have that $5^q \leq 2m+1$.
In the 64-bit case, we have  $5^q \leq 2m+1 \leq 2^{54}$ or $q \leq 23$.
In the 32-bit case, we have  $5^q \leq 2m+1 \leq 2^{25}$ or $q \leq 10$.
\item 
When $q<0$, we have $w  \geq  (2m+1) \times 5^{-q}$. We must have that $w<2^{64}$ so
$(2m+1) \times 5^{-q} < 2^{64}$.
 We have that $2m+1>2^{53}$ (64-bit case) or $2m+1>2^{24}$ (32-bit case). Hence, we must have $2^{53} \times 5^{-q} < 2^{64}$ (64-bit) and
 $2^{24} \times 5^{-q} < 2^{64}$
  (32-bit).
Hence we have  $5^{-q} < 2^{11}$ or $q\geq -4$  (64-bit case) and
$5^{-q} < 2^{40}$  or $q\geq -17$  (32-bit case).
\end{itemize}
Thus we have that we only need to round ties to even when  we have that 
$q\in [-4,23]$ (in the 64-bit case) or $q\in [-17,10]$ (in the 32-bit case). In both cases, the power of five ($5^{|q|}$) fits in a 64-bit word.

\section{Most Significant Bits of a Product}
\label{sec:mostsig}
When converting decimal values to binary values, we may need to multiply or divide by large powers of ten (e.g., $10^{300}= 5^{300} 2^{300}$). Mainstream processors  compute the 128-bit product of two 64-bit integers using one or two fast instructions: e.g., with the single instruction \texttt{imul} (x64 processors) or  two instructions  \texttt{umulh} and \texttt{mul} (aarch64 processors).
However, we cannot represent an integer like $5^{300}$ using a single 64-bit integer. We may represent such large integers using multiple 64-bit words, henceforth a \emph{multiword integer}. 

We may compute the product between two multiword integers starting from the least significant bits. Thus if we are multiplying an integer that requires a single machine word $w$ with an integer that requires $n$~machine words, we can use $n$~64-bit multiplications starting with a multiplication between the word $w$ and the least significant word of the other integer, going up to the most significant words. See Algorithm~\ref{algo:standardalgo}.

\begin{algorithm}
\begin{algorithmic}[1]
\Require  an integer $w \in (0,2^{64})$ 
\Require a positive integer $b$ represented as $n$~words ($n>0$) $b_0, b_1, \ldots, b_{n-1}$ such that $b=\sum_{i=0}^{n-1} b_i 2^{64 i}$.
\State Allocate $n+1$ words $u_0, u_1, \ldots, u_n$
\State $p \leftarrow w \times b_0$
\State $u_0 \leftarrow p \bmod 2^{64}$
\State $r \leftarrow p \div 2^{64}$
\For{$i = 1, \ldots, n-1$}
 \State $p \leftarrow w \times b_i$ \Comment{$p \leq (2^{64}-1)^2$}
 \State $p \leftarrow p + r$ \Comment{$p \leq 2^{128} - 2^{64} + 1$}
 \State $u_i \leftarrow p \bmod 2^{64}$
\State $r \leftarrow p \div 2^{64}$
\EndFor
 \State $u_n \leftarrow r$
 \State \textbf{Return}: The result of the multiplication as an $n+1$-word $u$ such that $u=\sum_{i=0}^{n} u_i 2^{64 i}$.
\end{algorithmic}
\caption{%
Conventional algorithm to compute the product of a single-word integer and a multiple-word integer. \label{algo:standardalgo}}
\end{algorithm}

Such a conventional algorithm is inefficient when we only need to approximate the product. For example, maybe we only want the most significant word of the product and we would like to do the computation using only one or two multiplications.
Thankfully it is often possible in practice. Such partial multiplications are sometimes called \emph{truncated} multiplications~\cite{hars2006applications} and the result is sometimes called a \emph{short} product~\cite{fousse2007mpfr,krandick1993efficient,mulders2000short}.

\begin{algorithm}[!tbh]
\begin{algorithmic}[1]
\Require  an integer $w \in (0,2^{64})$ 
\Require a positive integer $b$ represented as $n$~words ($n>0$) $b_0, b_1, \ldots, b_{n-1}$ such that $b=\sum_{i=0}^{n-1} b_i 2^{64 i}$.
\State a desired number of exact words $m\in (0,n+1]$
\State Allocate $n+1$ words $u_0, u_1, \ldots, u_n$
\State $p \leftarrow w \times b_{n-1}$
\State $u_{n-1} \leftarrow p \bmod 2^{64}$
\State $u_{n} \leftarrow p \div 2^{64}$
\If{$m = 1$ and $u_{n-1}<2^{64}-w$}\label{line:one}
\State \textbf{Return}:$u_{n}$ \Comment{Stopping condition}
\EndIf
\For{$i = n-2, n-1,  \ldots, 0$}
 \State $p \leftarrow w \times b_i$
 \State $u_{i} \leftarrow p \bmod 2^{64}$
 \If{$u_{i+1} + ( p \div 2^{64})\geq 2^{64}$}
\State add 1 to $u_{i+2}$, if it exceeds $2^{64}$, set it to zero and add 1 to $u_{i+3}$ and so forth up to $u_n$ potentially
\EndIf
\State $u_{i+1} \leftarrow (u_{i+1} + ( p \div 2^{64}))\bmod 2^{64}$
 \If{$m\leq n - i$ and $u_{i}<2^{64}-w$}
 \State \textbf{Return}:$u_{u-m+1},\ldots , u_{n}$ \Comment{Stopping condition}
\EndIf
 \If{$m< n - i$ and $u_{i}<2^{64}-1$}
 \State \textbf{Return}:$u_{n-m+1},\ldots , u_{n}$ \Comment{Stopping condition}
\EndIf
\EndFor
 \State \textbf{Return}:$u_{n-m+1},\ldots , u_{n}$.
\end{algorithmic}
\caption{%
Algorithm to compute the $m$~most significant words of the product  of a single-word integer and a multiple-word integer.  \label{algo:fancyalgo}}
\end{algorithm}

Suppose that we have computed the product of the single-word integer ($w$) with the $k$~most significant words of the multiword integer: we have computed the $k+1$~words of the product $w \times (\sum_{i=n-k}^{n-1} b_i 2^{64 i}) \div 2^{64(n-k)}$. Compared
with the $k+1$~most significant words of the full product
$(w \times (\sum_{i=0}^{n-1} b_i 2^{64 i})) \div 2^{64(n-k)}$, we are possibly underestimating because we omit the contribution of the product between
the word $w$ and the least $n-k$~significant words. 
These less $n-k$~significant words have maximal value $2^{64(n-k)} -1$. Their product with the word $w$ is thus at most
$2^{64(n-k)}w -w$ and their contribution to the most significant 
words is at most $(2^{64(n-k)}w -w) \div 2^{64(n-k)} = w-1$.
Hence if the least significant computed word that  is no larger than $2^{64}- w + 1$, then all computed words are exact except maybe for that least significant one.  Our short product matches the full product. Thus we have \emph{stopping condition}: if we only want the  $k$~most significant words of the product, we can compute the $k+1$~most significant words from the $k$~most significant words of the multiword integers, and stop if the least significant word of the product is no larger than $2^{64}- w + 1$.  This stopping condition is most useful if $w$ is small ($w \ll 2^{64}$).
We also have another stopping condition that is more generally useful. Even if the least significant word is larger than $2^{64}- w + 1$, then the second least significant word needs to be incremented by one in the worst case. If the second least significant word is not $2^{64}-1$, then all other more significant words are exact. That is,  we have $k-2$~exact most significant words if the second last of our $k+1$~most significant words is not $2^{64}-1$.

By combining these two conditions, we  rarely have to compute more than 3~words using two multiplications to get the exact value of the most significant word.
Algorithm~\ref{algo:fancyalgo} presents a general algorithm.  We can stop maybe even earlier if we need even less than the most significant word, say $t$~bits. Indeed, unless all of the less significant bits in the computed most significant word have value 1, then an overflow (+1) does not affect the most significant bits of the most significant word. Thus the condition $u_{n-1}<2^{64}-w$  (line~\ref{line:one} in Algorithm~\ref{algo:fancyalgo}) can be replaced by ($u_{n-1}<2^{64}-w$ or $u_n \bmod 2^{64-t} \not  = 2^{64-t}-1$) if we only need $t$~exact bits of the product.

\section{Multiplication by Positive Powers of Five}
\label{sec:pospower}
When parsing a decimal number, we follow the general strategy of first identifying the non-negative decimal significand $w$ and its corresponding exponent $q$. We then seek to convert $w\times 10^{q}$ to a  binary floating-point number. For example, given the mass of the Earth in kilograms as  $5.972\times 10^{24}$, we might parse it first as
$5972 \times 10^{21}$. Our goal is to represent it as a nearest binary  floating-point number such as
$5561858415603638  \times  2 ^ {30}$. 
The sign bit is handled separately.


The largest integer we can represent with a 64-bit floating-point number is  $\approx 1.8 \times 10^{308}$. Thus, when processing numbers of the form $w \times 10^q$ for non-negative powers of $q$, we only have to worry about $q \in [0, 308]$. If any larger value of $q$ is found and the decimal significand is non-zero ($w>0$), the result is an infinity value (either $+\infty$ or $-\infty$).

\begin{figure}\centering
\includegraphics[width=0.49\textwidth]{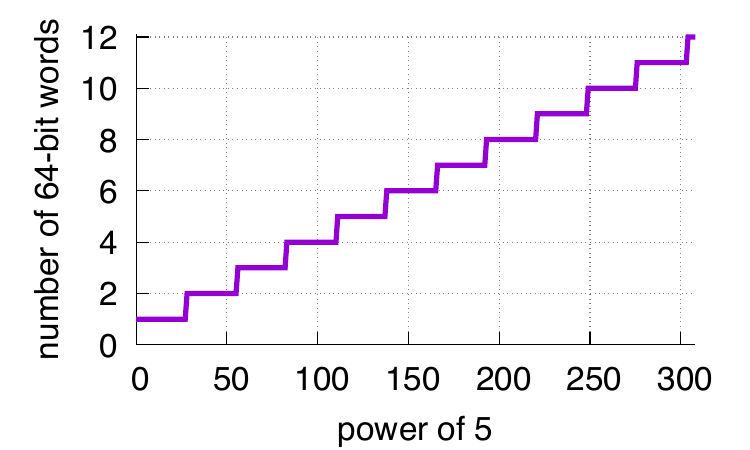}
\caption{\label{fig:words}Number of 64-bit words necessary to represent $5^q$ exactly for  $q\in [0,308]$. }
\end{figure}

Our first step is to expand the exponent: $w \times 10^q = w  \times 2^q \times 5^q $. Thus we seek to compute $w \times 5^q$. Though the number $5^{308}$ may appear large, it only requires twelve 64-bit~words since 
$5^{308}< 2^ {64\times 12}$. 
Storing all of these words require about \SI{15}{\kibi\byte}: we need between 1~and~12~words  per power (see Fig.~\ref{fig:words}).
Thus it could be practical to memoize all of the exponents $5^q$ for $q=1, \ldots, 308$.
In \S~\ref{sec:mostsig}, we compute the most significant word of a product with few words of the multiword integer: it follows that even if we need, in the worst case \SI{15}{\kibi\byte} of storage, an actual implementation may touch only a fraction of that memory. Thus it may be more efficient to only store  two 64-bit words  per power, as long as we can fall back  on a higher-precision approach. We then use
only \SI{5}{\kibi\byte} of storage (three times less).
Effectively, given a large power of five, we  store 128~bits of precision. 
We truncate the result, discarding its less significant words.




It could be that the most significant word of the product  $w \times 5^q$ contains many leading zeros. 
Yet we desire a normalized result, without leading zeros for efficiency and simplicity reasons.

\begin{itemize}

\item We  store the powers of five in a format where the most significant bit of the most significant word is 1: hence, we shift the power of five adequately: e.g., instead of storing $5^2$, we store $5^{2} \times 2^{59}$. See Table~\ref{tab:pospower}. 

\item Further,  we shift $w$ by an adequate power of two so that the 63$^\textrm{rd}$~bit has value 1.  It is always possible as $w$ is non-zero; the case when $w = 0$ is handled separately. Normalizing numbers such that they have no leading zeros  is inexpensive: the number of leading zeros can be computed using a single instruction on popular processors (\texttt{clz} on ARM,  \texttt{lzcnt} on x64). 
\end{itemize}

\begin{table}[tbh!]
    \centering
     \rowcolors{2}{gray!25}{white}
    \begin{tabular}{ccccc}
    \toprule
    $q$ & first word     & $q$ & first word & second word \\  
    \midrule     
0 & \texttt{8000000000000000}  & 28 & \texttt{813f3978f8940984} & \texttt{4000000000000000} \\
1 & \texttt{a000000000000000}  & 29 & \texttt{a18f07d736b90be5} & \texttt{5000000000000000} \\
2 & \texttt{c800000000000000}  & 30 & \texttt{c9f2c9cd04674ede} & \texttt{a400000000000000} \\
3 & \texttt{fa00000000000000}  & 31 & \texttt{fc6f7c4045812296} & \texttt{4d00000000000000} \\
4 & \texttt{9c40000000000000}  & 32 & \texttt{9dc5ada82b70b59d} & \texttt{f020000000000000} \\
5 & \texttt{c350000000000000}  & 33 & \texttt{c5371912364ce305} & \texttt{6c28000000000000} \\
6 & \texttt{f424000000000000}  & 34 & \texttt{f684df56c3e01bc6} & \texttt{c732000000000000} \\
7 & \texttt{9896800000000000}  & 35 & \texttt{9a130b963a6c115c} & \texttt{3c7f400000000000} \\
8 & \texttt{bebc200000000000}  & 36 & \texttt{c097ce7bc90715b3} & \texttt{4b9f100000000000} \\
9 & \texttt{ee6b280000000000}  & 37 & \texttt{f0bdc21abb48db20} & \texttt{1e86d40000000000} \\
10 & \texttt{9502f90000000000}  & 38 & \texttt{96769950b50d88f4} & \texttt{1314448000000000} \\
11 & \texttt{ba43b74000000000}  & 39 & \texttt{bc143fa4e250eb31} & \texttt{17d955a000000000} \\
12 & \texttt{e8d4a51000000000}  & 40 & \texttt{eb194f8e1ae525fd} & \texttt{5dcfab0800000000} \\
13 & \texttt{9184e72a00000000}  & 41 & \texttt{92efd1b8d0cf37be} & \texttt{5aa1cae500000000} \\
14 & \texttt{b5e620f480000000}  & 42 & \texttt{b7abc627050305ad} & \texttt{f14a3d9e40000000} \\
15 & \texttt{e35fa931a0000000}  & 43 & \texttt{e596b7b0c643c719} & \texttt{6d9ccd05d0000000} \\
16 & \texttt{8e1bc9bf04000000}  & 44 & \texttt{8f7e32ce7bea5c6f} & \texttt{e4820023a2000000} \\
17 & \texttt{b1a2bc2ec5000000}  & 45 & \texttt{b35dbf821ae4f38b} & \texttt{dda2802c8a800000} \\
18 & \texttt{de0b6b3a76400000}  & 46 & \texttt{e0352f62a19e306e} & \texttt{d50b2037ad200000} \\
19 & \texttt{8ac7230489e80000}  & 47 & \texttt{8c213d9da502de45} & \texttt{4526f422cc340000} \\
20 & \texttt{ad78ebc5ac620000}  & 48 & \texttt{af298d050e4395d6} & \texttt{9670b12b7f410000} \\
21 & \texttt{d8d726b7177a8000}  & 49 & \texttt{daf3f04651d47b4c} & \texttt{3c0cdd765f114000} \\
22 & \texttt{878678326eac9000}  & 50 & \texttt{88d8762bf324cd0f} & \texttt{a5880a69fb6ac800} \\
23 & \texttt{a968163f0a57b400}  & 51 & \texttt{ab0e93b6efee0053} & \texttt{8eea0d047a457a00} \\
24 & \texttt{d3c21bcecceda100}  & 52 & \texttt{d5d238a4abe98068} & \texttt{72a4904598d6d880} \\
25 & \texttt{84595161401484a0}  & 53 & \texttt{85a36366eb71f041} & \texttt{47a6da2b7f864750} \\
26 & \texttt{a56fa5b99019a5c8}  & 54 & \texttt{a70c3c40a64e6c51} & \texttt{999090b65f67d924} \\
27 & \texttt{cecb8f27f4200f3a}  & 55 & \texttt{d0cf4b50cfe20765} & \texttt{fff4b4e3f741cf6d} \\
    \bottomrule
    \end{tabular}
    \caption{Most significant bits in hexadecimal form of the powers $5^q$ for some exponents $q$. The values are normalized by multiplication with a power of two so that the most significant bit is always 1. For each power, the first word represent the most significant 64~bits, the second word the next most significant 64~bits. For $q\leq 27$, the second word is made of zeros and omitted. In practice, we need to materialize this table for up to $q=308$ to cover all of the relevant powers of five for 64-bit numbers.}
    \label{tab:pospower}
\end{table}
Hence, because of our normalization, the computed product is at least as large as $2^{63}\times 2^{63}=2^{126}$. That is, as a 128-bit value, it  has at most one leading 0. We can check for this case and shift the result by one bit accordingly.
To compute the product of  the normalized decimal significand with normalized words representing the power of five (see in Table~\ref{tab:pospower}), we use up to two multiplications.
\begin{itemize}
    \item
When $5^q<2^{64}$, then a single multiplication always provide an exact result. In particular, it implies that whenever we need to round ties to even, we have the exact product (see \S~\ref{sec:roundevenpositive}).
\item   We can always do just one multiplication as long as  it provides
the number of significant bits of the floating-point standard (53~bits for 64-bit numbers and 24~bits for 32-bit numbers) plus one extra bit to determine the rounding direction, and one more bit to handle the case where the computed product has a leading zero. It suffices to check the least significant bits of the most significant 64~bits and verify that there is at least one non-zero bit out of 9 bits for 64-bit numbers and out of 38~bits for 32-bit numbers. 
\item  When that is not the case, we execute a second multiplication. This second multiplication is always sufficient to compute the product exactly as long as $q\leq 55$ since $5^{55}<2^{128}$.
We have that the largest 32-bit floating-point number is $\approx 3.4 \times 10^{38}$ so that exponents  
greater than $55$ are irrelevant in the sense that they result in infinite values as long as the significand is non-zero. However, 64-bit floating-point numbers can be larger.
For larger values of $q$, we have that the most significant 64~bits of the truncated product are exact when  the second most significant word is not filled with 1-bits ($2^{64}-1$). When it is filled with ones then  the computation of a third (or subsequent) multiplication could add one to this second word which would overflow into the most significant word, adding a value of one again. It could maybe result into a value that was seemingly just slightly under the midpoint between two floating-point values (and would thus be rounded down) to appear to switch to just over the  midpoint between two floating-point values (and to be thus rounded up).
In this unlikely case, when the least significant 64~bits  of the most significant 128~bits of the computed product are all ones, we choose  to fall back on a higher-precision approach. 

\begin{techremark}
To be more precise, before we fall back, we could check that out of the most significant 128~bits, 
all but the leading  55~bits (64-bit case) or leading 26~bits are ones, instead of merely checking the least significant 64~bits. However, we can show
that checking that  the least significant 64~bits of the truncated 128-bit product are all ones is sufficient. We sketch the proof.
Before the computation of the second product, the most
significant 64 bits of the product have trailing ones (e.g., 9~bits for 64-bit numbers), otherwise we would not have needed a second multiplication.
If the second multiplication does not overflow into the most  significant 64~bits, then the final result still has trailing ones in its  most  significant 64~bits. Suppose it is not the case: there was an overflow in the most significant 64~bits following the second multiplication.
 When the second product overflows into the most significant 64~bits (turning these ones into zeros), the second 64-bit word of the product is at most $2 \times (2^{64}-1) \bmod 2^{64} = 2^{64}-2$. Thus when the second most significant 64~bits are all ones then there was no overflow into the most significant 64-bit word by the second multiplication and the least significant bits of the first 64-bit word of the product are also ones.
\end{techremark}
\end{itemize}

Once we have determined that we have sufficient accuracy,  we either need to check
for round-to-even cases ($q\leq 23$), see \S~\ref{sec:roundevenpositive}, or else we proceed
with the general case. 
In the general case, we consider  the most significant 54~bits (64-bit numbers) or 24~bits (32-bit numbers), not counting the possible leading zero, and then we round up or down based on the least significant bit.  E.g., we round down to 53~bits in the 64-bit cases when the least significant bit out of 54~bits is zero, otherwise we round up. If all of the bits are ones, rounding up overflows into a more significant bit and we shift  the result by one bit so we have the desired number of bits (53~bits or 24~bits). 
In \S~\ref{sec:expo}, we show how to compute the binary exponent efficiently.
Whenever we find that
the resulting binary floating-point number $m \times 2^p$ is too large, we return the infinite value. The upper limits are $ 2^{1024}$ ($\approx 1.7976934\times 10 ^{308}$) for 64-bit numbers  and $2^{128}$ ($3.4028236\times 10 ^{38}$) for 32-bit numbers. The infinite values are signed ($+\infty, - \infty$).

\begin{numberedexample}
Consider the string 
\texttt{2440254496e57}. We get $w=2440254496$ and $q=57$. We normalize the decimal significand $w$ by shifting it by 32~bits
so that the most significant bit is set (considering $w$ as a 64-bit word).
We look up the 64~most significant bits of $5^{57}$ which are given by the word value 11754943508222875079 (or $5^{57}\div 2^{133-64}$).
The most significant 64~bits of the product are, in hexadecimal notation, \texttt{0x5cafb867790ea3ff}. All of the least significant 9~bits are set. Thus we need to execute a second multiplication. In practice, it is a rare instance. We load the next most significant 64~bits of the product (\texttt{0xaff72d52192b6a0d}) and compute the second product. After updating our product, we get that the most significant word is  \texttt{0x5cafb867790ea400} whereas the second most significant word is \texttt{0x1974b67f5e78017}. The most significant 55~bits of the product have been computed exactly. 
The most significant bit of  \texttt{0x5cafb867790ea400} is zero. We select the most significant 55~bits of the word: 13044452201105234. It is an even value so we do not round up. Finally, we shift the result to get the binary significand $m=6522226100552617\in [2^{52}, 2^{53}) $.
\end{numberedexample}

\subsection{Round Ties to Even with Positive Powers}
\label{sec:roundevenpositive}

To handle rounding ties to even when the decimal power is positive ($q\geq 0$), we only have to be concerned when  the exponent $q$ is sufficient small: $q\leq 23$ assuming that we want to support both 64-bit  numbers (which require $q\leq 23 $) and  32-bit numbers (which require $q\leq 10 $) due to the fact that we require the decimal significand to be smaller than $2^{64}$ (see \S~\ref{sec:roundtoeven}). In such cases, the  product corresponding to $m \times 5^q$  is always exact after one multiplication (since $5^q<2^{64}$).

We need to round-to-even when the resulting binary significand uses one extra significant bit for a total of 54~bits in the 64-bit case and
25~bits in the 32-bit case. We can check for the occurrence of a round-to-even case as follows:
\begin{itemize}
\item The most significant 64-bit word has exactly 10~trailing zeros (64-bit case) or exactly 39~trailing zeros (32-bit case).\footnote{We assume that the most significant word of the product contains a leading 1-bit, maybe following a shift. If the word has a leading zero,  we must subtract one from these numbers: 9 for the 64-bit case and 38 for the 32-bit case.}
\item The second 64-bit word of the product containing the least significant bits equals  zero. 
\end{itemize}
In such cases, we get a 54-bit  (64-bit case) or a 25-bit (32-bit case) word that ends with a 1-bit. We need to round to  a 53-bit or 24-bit value. We can either round up or round down. To implement the round to even, we check the second last significant bit. If it is zero,  we round down; if it is one, we round up. When rounding up, if we overflow because we only have 1-bits,  we shift  the result to get the desired number of bits (53~bits or 24~bits).

When the round-to-even case is not encountered,  we either round up or down as in the general case. Given that we have computed the exact product ($q\leq 23$), there can be no ambiguity. 

\begin{techremark}
When $q\leq 23$, the second most significant 64-bit word  must have some trailing zeros, whether we are in a round-to-even case or not. Indeed, because we always have $5^q \leq 2^{54}$
and because we store the powers of five shifted  so that they have a leading 1-bit in their respective 64-bit words (e.g., $5^{23}$ is loaded as $5^{23} \times 2^{10}$), we have at least 10~trailing zeros, irrespective of the word $w$ as long as $q\leq 23$. Hence we can replace a
comparison with zero with a check that it is no larger than a small value like 1, if it is convenient. It turns out to be useful to us for simplifying Algorithm~\ref{algo:fancytotalalgo}, see line~\ref{line:smallerthanone}.
\end{techremark}

\begin{numberedexample}
Consider the input string \texttt{9007199254740993}. It is equal to $2^{53}+1$, a number that cannot be represented as a standard 64-bit floating-point number.
After parsing the string, we get $w= 9007199254740993$ and $q= 0$.
We normalize $w$ to 9223372036854776832 ($2^{63} + 2^{10}$). We multiply
this shifted $w$ by $5^q$ which, in this case, is just 1 ($5^q = 1$). The precomputed $5^q$ has been normalized 
to the 64-bit value $2^{63}$. Hence we get the product $2^{63} w$.
We then select the most significant 54~bits of the result after skipping the leading zero~bit ($2^{53}+1$). Because this result has its least significant bit set, and because all the bits that were not selected are zero, we know that we need to  round to even. The second least significant bit is zero, and we know that we need to round down. We therefore end up with a significand of one and an exponent of $2^{53}$.
\end{numberedexample}





\section{Division by Powers of Five}
\label{sec:divisionpowerfive}



We turn our attention to negative decimal exponents ($q<0$).
Consider the decimal number
$9.109\times 10^{-31}$ corresponding to the mass of the electron. We can write it as
$9109\times 10^{-34}$ or 
$9109\times 2^{-34} \times 5^{-34}$. Though we can represent the integer
$5^{34}$ using few binary words, the value $5^{-34}$ could only be approximated
in binary form. E.g., it is approximately equal to 
$4676805239458889  \times 2 ^ {-131}$.
To convert it to a binary floating-point number, we must find $m$ and $p$ such that
$9109\times 2^{-34} \times 5^{-34} \approx m \times 2^p$.

Consider the general case, replacing $9109\times 2^{-34}$ by $w \times 10^q$ where $0<w< 2^{64}$ and $q<0$.
We need to approximate the decimal floating-point number $w \times 10^q$ as closely 
as possible with the binary floating-point number $m\times 2^p$.
We formalize $m\times 2^p \approx w \times 10^q$ as the equation $m\times 2^p + \epsilon 2^p = w \times 10^q$ where $ \epsilon$ is the approximation error. If the binary significand $m$
is chosen optimally, then the error must be as small as possible: $\vert \epsilon \vert \leq 1/2$. 
Dividing throughout, we get 
$m +  \epsilon =  w \times 2^{q-p} / 5^{-q} $
or $m = \round ( w \times 2^{q-p} / 5^{-q})$.
Thus the problem is reduced to a division by an integer power of five ($5^{-q}$) of an integer $w$
that fits in a 64-bit word  multiplied by a power of two.
 The correct value of $p$ is such that $m$ should fit in 53~bits (64-bit case) or 24~bits (32-bit case). In practice,  we can derive the correct value $m$ after rounding if we compute the division $w \times  2^{b}/ 5^{-q}$ for a sufficiently large power of two $2^b$.
 
 Dividing a large integer by another large integer could be expensive.
Thankfully, we can compute the quotient and remainder of  such a division by the divisor $d= 5^{-q}$ using a multiplication followed
by a right shift when the divisor $d$ is known ahead of time. Many optimizing
compilers have been using such a strategy for decades. We apply the following result derived from Warren~\cite{lemire2019faster,warr:hackers-delight-2e}.

\begin{theorem}\label{thm:remainder}
Consider an integer divisor $d>0$ and a range of  integer numerators $n\in[0, N]$ where $N\geq d$ is an integer.
We have that  
\begin{eqnarray*}
n\div d = \floor(c \times n/t)
\end{eqnarray*}
 for all integer numerators $n$ in the range if and only if
\begin{eqnarray*}1/d \leq c/t< \left (1+\frac{1}{N -  \remainder(N+1,d)} \right )1/d.\end{eqnarray*}
\end{theorem}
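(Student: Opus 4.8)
The plan is to set $f = c/t$ and observe that the claimed identity $n \div d = \floor(cn/t)$ is exactly the statement $\floor(n/d) \le fn < \floor(n/d)+1$ for every integer $n \in [0,N]$, the case $n=0$ being automatic. I would therefore prove both directions at once by locating the tightest lower and upper constraints that $f$ must satisfy, since the two bounds decouple: the lower inequality pins down one endpoint and the upper inequality the other.

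For the lower bound, $fn \ge \floor(n/d)$ is equivalent to $f \ge \floor(n/d)/n$ for each $n \ge 1$, and $\floor(n/d)/n \le 1/d$ with equality whenever $d \mid n$. Because $N \ge d$, the numerator $n=d$ lies in range, so the supremum $1/d$ is attained; hence the lower constraint is precisely $f \ge 1/d$, which is the left inequality of the theorem, and this argument establishes both its necessity and its sufficiency.

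The real content is the upper bound $fn < \floor(n/d)+1$, equivalently $f < (\floor(n/d)+1)/n$ for all $n \in [1,N]$; I must show that the minimum of $(\floor(n/d)+1)/n$ over this range equals $(1 + 1/M)/d$, where $M := N - \remainder(N+1,d)$. First I would record the identity $M = d\,\floor((N+1)/d) - 1$, so that $M = kd-1$ with $k := \floor((N+1)/d) \ge 1$; thus $M$ is the largest $n \le N$ with $\remainder(n,d) = d-1$, and a direct substitution gives $(\floor(M/d)+1)/M = k/(kd-1) = (1 + 1/M)/d$, matching the right-hand bound exactly. Writing each $n = \floor(n/d)\,d + r_n$ with $r_n = \remainder(n,d)$ and rearranging turns $(\floor(n/d)+1)/n = \frac{1}{d}\bigl(1 + (d-r_n)/n\bigr)$, so minimizing the left side reduces to the clean integer inequality $k\,(d - r_n) \ge \floor(n/d) + 1$ for all admissible $n$.

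The main obstacle is establishing this last inequality, which I expect to handle by a short case split on the remainder $r_n$. When $r_n \le d-2$ we have $d - r_n \ge 2$, and since $\floor(n/d) \le \floor(N/d) \le k$ and $k \ge 1$, the inequality follows from $k(d-r_n) \ge 2k \ge k+1 \ge \floor(n/d)+1$. When $r_n = d-1$ we have $n = (\floor(n/d)+1)d - 1 \le N$, which forces $\floor(n/d)+1 \le \floor((N+1)/d) = k$, giving the required bound and pinning the equality case at $n = M$. Once the minimum is identified, sufficiency follows by chaining $fn < \tfrac{1+1/M}{d}\,n \le \floor(n/d)+1$, and necessity follows by testing $n=d$, which forces $f \ge 1/d$, together with $n=M$, which forces $f < k/M = (1+1/M)/d$; this completes both implications.
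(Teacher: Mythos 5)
Your proof is correct and complete. Note first that the paper itself does not prove this theorem: it imports it as a known result ``derived from Warren'' (citing Hacker's Delight and an earlier paper of Lemire et al.\ on computing remainders by direct computation), so there is no in-paper argument to compare against line by line. Your argument is a clean, self-contained derivation: reducing $n\div d=\floor(fn)$ to the two-sided condition $\floor(n/d)\leq fn<\floor(n/d)+1$, observing that the two constraints decouple, and identifying the binding numerators --- $n=d$ for the lower bound and $n=M=N-\remainder(N+1,d)=kd-1$ with $k=\floor((N+1)/d)$ for the upper bound --- is exactly the right structure, and it delivers necessity and sufficiency simultaneously. I checked the key steps: the identity $(\floor(n/d)+1)/n=\tfrac{1}{d}\bigl(1+(d-r_n)/n\bigr)$, the algebraic equivalence of $(\floor(n/d)+1)/n\geq k/M$ with $k(d-r_n)\geq\floor(n/d)+1$, and both branches of the case split on $r_n$ (using $k\geq 1$ and $\floor(n/d)\leq k$ when $r_n\leq d-2$, and $\floor(n/d)+1\leq\floor((N+1)/d)=k$ when $r_n=d-1$) all hold, and $M\geq 1$ under the hypothesis $N\geq d$ so the bound in the theorem statement is well defined. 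The one thing worth making explicit in a final write-up is the standing assumption $t>0$ so that $f=c/t$ is well defined and the division by $n$ and by $t$ preserves inequality directions; this is implicit in the theorem statement but deserves a sentence.
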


Intuitively,  if $c/t$ is close to $1/d$,  we can
replace $n/d$ by $n \times c / t$.
We apply Theorem~\ref{thm:remainder} by picking a power of two for $t$ so that the computation of
$ \floor(c \times n/t)$ can be implemented as a multiplication followed by a shift. Given $t$, the smallest constant $c$ such that $1/d \leq c/t$ holds is $c = \ceiling (t/d)$.  It remains
 to check that $c=\ceiling (t/d)$ is sufficiently close to $t/d$:
\begin{eqnarray*}\ceiling (t/d)\times  d <\left (1+\frac{1}{N -  \remainder(N+1,d)} \right )t.\end{eqnarray*}
We have that $\ceiling (t/d) \times d \leq t -1 + d$.
By substitution, we get the necessary condition $d-1 < t/ (N -  \remainder(N+1,d)) \leq t/N$. And so we have the convenient sufficient
condition $t > (d-1) \times N$.
We have shown Corollary~\ref{cor:remainder}.

\begin{corollary}\label{cor:remainder}
Consider an integer divisor $d>0$ and a range of  integer numerators $n\in[0, N]$ where $N\geq d$ is an integer.
We have that  
\begin{eqnarray*}
n\div d = \floor \left (\ceiling \left  (\frac{t}{d} \right ) \frac{n}{t} \right )
\end{eqnarray*}
 for all integer numerators $n$ if  $t > (d-1) \times N$.
\end{corollary}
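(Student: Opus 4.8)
The plan is to specialize Theorem~\ref{thm:remainder} to the explicit choice $c = \ceiling(t/d)$ and then discharge the two-sided bound appearing in that theorem's hypothesis using only the weaker, more convenient condition $t > (d-1) \times N$. Because the theorem is an ``if and only if'' statement, it suffices to show that this $c$ satisfies both the lower bound and the upper bound on the ratio $c/t$; the choice $c = \ceiling(t/d)$ is natural here precisely because it is the smallest integer for which the lower bound can hold.

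The lower bound $1/d \leq c/t$ is immediate: by definition $c = \ceiling(t/d) \geq t/d$, so $c/t \geq 1/d$. The real work is the upper bound $\frac{c}{t} < \left(1 + \frac{1}{N - \remainder(N+1,d)}\right)\frac{1}{d}$, which after clearing denominators reads $\ceiling(t/d)\,d < \left(1 + \frac{1}{N - \remainder(N+1,d)}\right) t$. Here I would invoke the standard ceiling estimate $\ceiling(t/d) < t/d + 1$; since $t$ and $d$ are integers, multiplying through by $d$ yields the integer bound $\ceiling(t/d)\,d \leq t + d - 1$ (the left-hand side is the least multiple of $d$ at least $t$, hence lies in $[t, t+d)$). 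Substituting, it then suffices to prove $t + d - 1 < t + \frac{t}{N - \remainder(N+1,d)}$, that is, $d - 1 < \frac{t}{N - \remainder(N+1,d)}$.

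It remains to collapse this last inequality into the stated hypothesis. Since $\remainder(N+1,d) \geq 0$, the denominator obeys $N - \remainder(N+1,d) \leq N$, and because $N \geq d$ it is strictly positive, so the division is legitimate; consequently $\frac{t}{N - \remainder(N+1,d)} \geq \frac{t}{N}$. Therefore the sufficient condition $t > (d-1) \times N$, equivalently $t/N > d-1$, forces $\frac{t}{N - \remainder(N+1,d)} \geq \frac{t}{N} > d-1$, which is exactly the inequality needed to close the chain back to the theorem's upper bound.

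I expect the main obstacle to be bookkeeping on the direction of the inequalities. In particular, one must confirm that enlarging the denominator from $N - \remainder(N+1,d)$ to $N$ \emph{weakens} the required lower bound on the fraction rather than strengthening it, so that $t > (d-1) \times N$ is genuinely \emph{sufficient} and not merely necessary. The integrality step giving the sharp bound $\ceiling(t/d)\,d \leq t + d - 1$, rather than the looser strict inequality $\ceiling(t/d)\,d < t + d$, is also worth stating carefully, since it is what lets the final estimate match up cleanly with the hypothesis of Theorem~\ref{thm:remainder}.
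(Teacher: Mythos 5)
Your proof is correct and follows essentially the same route as the paper: choose $c=\ceiling(t/d)$, verify the lower bound trivially, bound $\ceiling(t/d)\times d \leq t+d-1$, and reduce the upper bound to $d-1 < t/(N-\remainder(N+1,d))$, which follows from $t>(d-1)\times N$ since $N-\remainder(N+1,d)\leq N$. You even state the final inequality chain with the correct orientation ($t/(N-\remainder(N+1,d)) \geq t/N$), whereas the paper's text writes that comparison with the inequality sign reversed.
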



Observe how we multiply the numerator by a reciprocal ($\frac{t}{d}$) that is rounded up. It suggests that we need to store precomputed reciprocals while rounding up. In contrast, when processing positive decimal exponents, we would merely truncate the power, thus effectively rounding it down.
    
We consider two scenarios. Firstly, the case with  negative powers with exponents near zero  requires more care because we may need to round to the nearest even (\S~\ref{sec:nearone}). Secondly, we consider the general case ($5^{-q}\geq 2^{64}$)  (\S~\ref{sec:otherneg}).


 

\subsection{Negative Exponents Near Zero ($q \geq -27$)}

\label{sec:nearone}





We are interested in identifying  cases when  rounding to even is needed because of a tie (e.g., $2^{-1022} + 2^{-1074}+ 2^{-1075}$ is one such number). Round-to-even  only happens for negative values
of $q$ when $q\geq -4$ in the 64-bit case and when  $q\geq -17$ in the 32-bit case (see \S~\ref{sec:roundtoeven}). In either cases, we have
that $5^{-q}$ fits in a 64-bit word (i.e., $5^{-q} < 2^{64}$).

Theorem~\ref{thm:remainder} tells us how to quickly compute a quotient given a precomputed
reciprocal, but for values of $q$ near zero, we need to be able to detect when the remainder is zero, so we know when to round ties to even.
 We use the following technical lemma which allows us to verify quickly whether our significand is divisible by the power of five.
 \begin{lemma} \label{lemma:tech1}Given an integer divisor $d>0$, pick an integer $K>0$ such that $2^K\geq d$. Given an integer $w>0$, then 
 $(w \times 2^{K}) \div d $ is divisible by $2^{K}$ if and only if $w$ is divisible by $d$.
 \end{lemma}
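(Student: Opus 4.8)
The plan is to argue directly from the division-with-remainder identity. Write $Q = (w \times 2^{K}) \div d = \floor(w \times 2^{K}/d)$ and let $R = \remainder(w \times 2^{K}, d)$ be the accompanying remainder, so that $w \times 2^{K} = d\,Q + R$ with $0 \leq R < d$. The single structural fact I will lean on is that the hypothesis $2^{K} \geq d$ upgrades this to $0 \leq R < 2^{K}$; confining $R$ to the window $[0, 2^{K})$ is the crux of the whole equivalence.

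First I would dispatch the easy implication. If $w$ is divisible by $d$, say $w = d\,s$, then $w \times 2^{K}/d = s \times 2^{K}$ is already an integer, so $Q = s \times 2^{K}$ and $R = 0$; in particular $Q$ is a multiple of $2^{K}$, as required.

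For the converse, assume $2^{K}$ divides $Q$ and write $Q = 2^{K} t$ for some integer $t \geq 0$. Substituting into $w \times 2^{K} = d\,Q + R$ gives $R = w \times 2^{K} - d\,2^{K} t = 2^{K}(w - d\,t)$, so $R$ is itself a multiple of $2^{K}$. But I have already observed that $0 \leq R < 2^{K}$, and the only multiple of $2^{K}$ lying in the half-open interval $[0, 2^{K})$ is $0$. Hence $R = 0$, which forces $w - d\,t = 0$, that is, $w = d\,t$ and $d$ divides $w$.

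The only delicate point — and the sole place the hypothesis $2^{K} \geq d$ is used — is pinning the remainder to $[0, 2^{K})$ in the converse direction. Without it the argument genuinely breaks: if $d > 2^{K}$, a numerator of the form $w = d\,t + 1$ produces $R = 2^{K} < d$, so $Q = 2^{K} t$ is divisible by $2^{K}$ while $w$ is not divisible by $d$, and the equivalence fails. Beyond this observation, the proof is just routine bookkeeping with the quotient–remainder decomposition, so I would expect the write-up to be very short.
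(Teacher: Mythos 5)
Your proof is correct and follows essentially the same route as the paper's: both directions rest on the quotient–remainder decomposition $w\times 2^K = d\,Q + R$ with $0\leq R < d \leq 2^K$, and the converse concludes by noting that $R$ is forced to be a multiple of $2^K$ lying in $[0,2^K)$, hence zero. Your write-up is in fact slightly more explicit than the paper's (which leaves the step ``$\rho$ is a multiple of $2^K$'' implicit), and the counterexample showing the necessity of $2^K\geq d$ is a nice addition, but the argument is the same.
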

 \begin{proof}If $w$ is divisible by $d$ then   $(w \times 2^{K}) \div d = (w \div d) \times 2^{K}$ and thus  $(w \times 2^{K}) \div d $ is divisible by $2^{K}$. Suppose that $(w \times 2^{K}) \div d $ is divisible by $2^{K}$, then we
 can write $w \times 2^{K} = d \times 2^{K} \times \gamma + \rho$ where $\gamma , \rho$ are non-negative integers with
 $\rho < d$. But because $\rho < 2^K$, we must have $\rho= 0$ and so 
  $w \times 2^{K} = d \times 2^{K} \times \gamma $ or  $w  = d \times \gamma $ which shows that $w$ is divisible by $d$.
 \end{proof}

 To detect a zero remainder, we can consider the decimal 
significand $w$ as a 128-bit integer (made of two 64-bit words). 
Within the most significant 64-bit word, we shift $w$ by an adequate power of two so that the 63$^\textrm{rd}$~bit has value 1: let us call the result $w'$.
The second, least significant, word is virtual and assumed to be zero. In effect, we consider the 128-bit  value $2^{64}\times w'$. We are going to treat mathematically this 128-bit value as a 127-bit value equal to $w' \times 2^{63}$, ignoring the least significant bit. We could ignore more than one bit, but it is convenient to ignore the least significant bit.


As long as $5^{-q} < 2^{63}$, we have that
 $(w' \times 2^{63})\div 5^{-q} \geq 2^{63}$ and we have more than enough
 accuracy to compute the desired binary significand. Hence, we can
 go as low as $q=-27$.\footnote{Though we go as low as $q=-27$, values of $q$ smaller than $-17$ cannot lead to a tie-to-even scenario using 64-bit or 32-bit floating-point numbers.}
We apply Theorem~\ref{thm:remainder} to divide this value  ($w' \times 2^{63}$) no larger than $N=2^{127}$ by the power of five $5^{-q}$:
we set $t = 2^{b}$ with $b= 127 + \ceiling(\log_2(5^{-q}))$
and the reciprocal $c=  \ceiling (t/d) = \ceiling (2^{b}/5^{-q})$. 
We can check that $c\in[2^{127},2^{128})$.
We can verify the condition ($ t> (d -1)\times N$)
of  Corollary~\ref{cor:remainder}. 
 The values of the 128-bit reciprocals are given in Table~\ref{tab:tabnearone}.
\begin{table}[!tbh]
    \centering
     \rowcolors{2}{gray!25}{white}
    \begin{tabular}{ccc}
    \toprule
$q$ & reciprocal $\div 2^{64}$ & reciprocal $\bmod 2^{64}$  \\
    \midrule     
-1 & \texttt{cccccccccccccccc} & \texttt{cccccccccccccccd} \\
-2 & \texttt{a3d70a3d70a3d70a} & \texttt{3d70a3d70a3d70a4} \\
-3 & \texttt{83126e978d4fdf3b} & \texttt{645a1cac083126ea} \\
-4 & \texttt{d1b71758e219652b} & \texttt{d3c36113404ea4a9} \\
-5 & \texttt{a7c5ac471b478423} & \texttt{0fcf80dc33721d54} \\
-6 & \texttt{8637bd05af6c69b5} & \texttt{a63f9a49c2c1b110} \\
-7 & \texttt{d6bf94d5e57a42bc} & \texttt{3d32907604691b4d} \\
-8 & \texttt{abcc77118461cefc} & \texttt{fdc20d2b36ba7c3e} \\
-9 & \texttt{89705f4136b4a597} & \texttt{31680a88f8953031} \\
-10 & \texttt{dbe6fecebdedd5be} & \texttt{b573440e5a884d1c} \\
-11 & \texttt{afebff0bcb24aafe} & \texttt{f78f69a51539d749} \\
-12 & \texttt{8cbccc096f5088cb} & \texttt{f93f87b7442e45d4} \\
-13 & \texttt{e12e13424bb40e13} & \texttt{2865a5f206b06fba} \\
-14 & \texttt{b424dc35095cd80f} & \texttt{538484c19ef38c95} \\
-15 & \texttt{901d7cf73ab0acd9} & \texttt{0f9d37014bf60a11} \\
-16 & \texttt{e69594bec44de15b} & \texttt{4c2ebe687989a9b4} \\
-17 & \texttt{b877aa3236a4b449} & \texttt{09befeb9fad487c3} \\
\midrule
-18 & \texttt{9392ee8e921d5d07} & \texttt{3aff322e62439fd0} \\
-19 & \texttt{ec1e4a7db69561a5} & \texttt{2b31e9e3d06c32e6} \\
-20 & \texttt{bce5086492111aea} & \texttt{88f4bb1ca6bcf585} \\
-21 & \texttt{971da05074da7bee} & \texttt{d3f6fc16ebca5e04} \\
-22 & \texttt{f1c90080baf72cb1} & \texttt{5324c68b12dd6339} \\
-23 & \texttt{c16d9a0095928a27} & \texttt{75b7053c0f178294} \\
-24 & \texttt{9abe14cd44753b52} & \texttt{c4926a9672793543} \\
-25 & \texttt{f79687aed3eec551} & \texttt{3a83ddbd83f52205} \\
-26 & \texttt{c612062576589dda} & \texttt{95364afe032a819e} \\
-27 & \texttt{9e74d1b791e07e48} & \texttt{775ea264cf55347e} \\
    \bottomrule
    \end{tabular}
    \caption{Values of the 128-bit reciprocals in hexadecimal form for powers with negative exponents near zero as two 64-bit words. The reciprocal is given by $\ceiling(\frac{2^{b}}{5^{-q}})$ with $b=127+\ceiling(\log_2 (5^{-q}))$.}
    \label{tab:tabnearone}
\end{table}
By Theorem~\ref{thm:remainder}, we have that
$(2^{63}\times w') \div  5^{-q} = ((2^{63}\times w')   \times c) \div 2^{b}$. The 128-bit constant $c$ is precomputed for all relevant powers
of $q= -1, -2,  \ldots, -27$. 

Computationally, we do not need to store the 
 product of the shifted 128-bit $w$ with the 
128-bit reciprocal $c$ as a 256-bit product: we compute only the first two words (i.e., the most significant
128-bit). And then we select only the most significant 127~bits.  

Thus checking that the second (least significant) word of the computed product is zero except maybe for the least significant bit is enough to determine that our word ($w$) is divisible by five. By dividing (the shifted) $w$ (that is $w'$) by $5^{-q}$, we effectively compute a binary significand $2m+1$ as per the equation
 $(2m+1) \times 2^{p-1} = w \times 10^q$.
 We need to round to even whenever $(2m+1) \in [2^{53}, 2^{54})$ for 64-bit floating-point numbers and whenever $(2m+1) \in [2^{24}, 2^{53})$ for 32-bit floating-point numbers. Otherwise, we round  the result normally to the nearest 53-bit word (64-bit numbers) or 24-bit word (32-bit numbers).

We need  up to two 64-bit multiplications. We stop after the first multiplication if and only if  the least significant bits of the most significant words are not all ones. We want that the most significant bits of the most significant 64-bit word are exact: 55~bits for 64-bit numbers and 26~bits for 32-bit numbers. We have that $64-55$ is 9 and $64-26$ is 38. Hence, 
we  check the least significant  9~bits for 64-bit numbers and the least significant  38~bits for 32-bit numbers.  
As long as we are not in a round-to-even case, we
round up or down based on the least significant selected bit. If all bits have the value 1, then rounding up overflows into a more significant bit and we must shift  by one bit.

We need to ensure that we correctly identify all  ties requiring the round-to-even strategy. Specifically, we need to never incorrectly classify a number as a tie, and we need
to never miss a tie.
\begin{itemize}
\item We need to be concerned about a false round-to-even scenario when, after stopping with just one multiplication, we end up with a misleading result that could pass as a round-to-even case. Indeed, we can stop
after one multiplication when  the least significant bits of the most significant words are all zeros. 
However, a round-to-even case cannot occur after a single multiplication:
\begin{enumerate}
\item It could happen if the least significant $64+9$~bits of the product are zeros.  The 128-bit product of two 64-bit words may only have as many trailing zeros as the sum of the number of trailing zeros of the first 64-bit word with the number of trailing zeros of the second word. To get a total of $64+9$~trailing zeros, assuming that both words are non-zero, we have the necessary conditions that both words must  have at least 10~trailing zeros. Thus, for this problem to occur, we need for the most significant 64-bit word of the reciprocal $c$ to have at least 10~trailing zeros. We can check that it does not happen: there are only 17~powers to examine. We find at most 2~trailing zeros. See Table~\ref{tab:insane} (third column).
\item A false round-to-even may also happen if  all the least significant $64+9$~bits of the product are zeros, except for the least significant bit. However, for the 128-bit product of two 64-bit words to have its least significant bit be 1, we need for both of the 64-bit words to have their least significant bits set to 1 (they are odd). Given an odd 64-bit integer, there is  only one other 64-bit integer such as the least significant 64~bits of the product is 1. Indeed suppose that $a \times b_1 = 1 \bmod 2^{64}$ and
$a \times b_2 = 1 \bmod 2^{64}$ for numbers in $[0,2^{64})$ then $a \times (b_1-b_2) = 0 \bmod 2^{64}$ which implies that $b_1=b_2$.
They are effectively multiplicative inverses (modulo $2^{64}$). We can thus compute the multiplicative inverses (see Fig.~\ref{fig:multiinverse}) and check the full 128-bit product. Again, we only need to examine 17~powers. We find the powers that have an odd integer in their most significant 64~bits, we compute the multiplicative inverse  and we compute the full product. Looking at the most significant 64~bits of the resulting product, we find that they have at most 5~trailing zeros. See Table~\ref{tab:insane} (last column).
\end{enumerate}

\begin{table}[tbh]
    \centering
    \begin{tabular}{ccccc}
    \toprule
$q$ & reciprocal $\div 2^{64}$  & inverse &  product & 0s \\
    \midrule     
-3 &  \texttt{83126e978d4fdf3b}  & \texttt{c687d6343eb1a1f3} & \texttt{65a5cdedb181dc22} & 1\\
-4 &  \texttt{d1b71758e219652b}  & \texttt{6978533007ec3183} & \texttt{5666aa8c1bca175b} & 0\\
-5 &  \texttt{a7c5ac471b478423}  & \texttt{b464ceec1a874b8b} & \texttt{76390df51733b898} & 3\\
-6 &  \texttt{8637bd05af6c69b5}  & \texttt{2d28ff519dc1fc9d} & \texttt{17ad4acbd85ad372} & 1\\
-9 &  \texttt{89705f4136b4a597}  & \texttt{47a5ffb53d302a27} & \texttt{26774920b7634d5b} & 0\\
-12 &  \texttt{8cbccc096f5088cb}  & \texttt{ccda17e7d0519ce3} & \texttt{709e5881abf430de} & 1\\
-13 &  \texttt{e12e13424bb40e13}  & \texttt{a976a8f009f3ec1b} & \texttt{950fca8d051f7f36} & 1\\
-14 &  \texttt{b424dc35095cd80f}  & \texttt{4776114e932f16ef} & \texttt{32494e3df377fbda} & 1\\
-15 &  \texttt{901d7cf73ab0acd9}  & \texttt{b7d434f9093d1369} & \texttt{677c8a9266f5159b} & 0\\
-16 &  \texttt{e69594bec44de15b}  & \texttt{30fad280461f66d3} & \texttt{2c1df79145125a20} & 5\\
-17 &  \texttt{b877aa3236a4b449}  & \texttt{89ee897ef59d7df9} & \texttt{6363ec689fe3979b} & 0\\
    \bottomrule
    \end{tabular}
    \caption{Values of the most significant 64~bits of the 128-bit reciprocals 
    in hexadecimal form for  powers of negative exponents near zero 
    and the multiplicative inverse modulo $2^{64}$ of the reciprocal for odd reciprocals 
    ($q=-1, -2, -7, -8, -10, -11$ are omitted since their reciprocals are even). We compute the most significant bits 
    of the 128-bit product between the reciprocal and its inverse. We indicate the number of trailing zeros for  the most significant bits of the product.}
    \label{tab:insane}
\end{table}

\item We need to be concerned with the reverse scenario where, after a single multiplication, we stop the computation and fail to detect an actual round-to-even case. 
If we stop after one multiplication, then at least one of the least significant bits (9~bits for 64-bit numbers, 38~bits for 32-bit numbers) of the most significant 64~bits is zero.
In such a case, the 128~most significant bits of the full (exact) product must end with a long stream of zeros, except maybe for the least significant bit. We know that the most significant 64~bits are exact after a single product, except maybe for the need to increment by 1.  
The most significant 64~bits cannot be exact after one multiplication if we have a round-to-even case. So we must increment them
by 1 following the second multiplication, and then the final result contains at least one non-zero bit in the least significant bits (9~bits for 64-bit numbers, 38~bits for 32-bit numbers) of the most significant 64~bits. It contradicts the fact that we had an actual round-to-even case.
Hence, we cannot fail to detect an actual round-to-even case by stopping the computation after one multiplication.
\end{itemize}

Thus we can identify accurately the round-to-even cases. In these cases, we proceed as in \S~\ref{sec:roundevenpositive}.  After discarding a potential leading zero-bit, we have 54~bits (64-bit case) and 25~bits (32-bit case). The least significant bit is always a 1-bit. We round down when the second least significant bit is zero, otherwise we round up. When rounding up, we might overflow into an additional bit if we only have ones, in such a case we shift  the result.

\subsection{Other Negative Powers ($q<-27$)}

\label{sec:otherneg}
Consider the case where the decimal exponent is far  from zero ($q< -27$). In such cases, the decimal number can never be exactly in-between two floating-point numbers: thus with a single extra bit of accuracy, we can safely either round up or down.

  The smallest positive value that can be represented using a 64-bit floating-point number is $2^{-1074}$. For 32-bit numbers, we have the larger value $2^{-149}$. Because  we have that $w\times 10 ^{-343} < 2^{-1074}$ for all $w<2^{64}$, it follows that we never have to be concerned with overly small decimal exponents: when $q<-342$, then the number is assuredly zero.

From the decimal number $m \times 10^q$, we seek the  binary significand  $m = \round ( w \times 2^{q-p} / 5^{-q})$ where the binary power $p$ is chosen such that $m$ is within the range of the floating-point numbers (e.g., $m\in [2^{52}, 2^{53})$).
It is enough to compute $m'=\floor(w \times 2^b/ 5^{-q}) $ with $b$ large enough that
$m' \geq 2^{53}$ so that we can compute $m = \round ( w \times 2^{q-p} / 5^{-q})$ accurately by selecting the most significant 53~bits (64-bit numbers) or 24~bits (32-bit numbers) of the wider value $m'$ and then round it up (or down) based on the $54^{\mathrm{th}}$ or $25^{\mathrm{th}}$ bit value.

 We can pick $b=64+ \ceiling( \log_2 5^{-q})$. 
We apply  Corollary~\ref{cor:remainder} with  $t = 2^{2b}$, $d= 5^{-q}$,
and $N= (2^{64} - 1) 2^b$. 
We precompute $c=  \ceiling (t/d) = \ceiling (2^{2 b}/5^{-q})$ for all relevant powers of $q\geq-342$. 
See Table~\ref{tab:awayfromone}.
We  only store  the most significant 128~bits of $c$, and rely on a truncated multiplication.
Because there is no concern with rounding to even, we can safely round up from the most significant bits of the computed quotient. 
We do just one multiplication if  it provides
the number of significant bits of the floating-point standard (53~bits for 64-bit numbers and 24~bits for 32-bit numbers) plus one additional bit to determine the rounding direction, and yet one more bit to handle the scenario where the computed product has a leading zero. We  always stop after this second multiplication when we have a truncated product with the second most significant word not filled with ones ($2^{64}-1$).
Otherwise, we fall back on a higher-precision approach, an unlikely event.
 
After possibly omitting the leading zero of the resulting product, we select the most significant bits (54~bits in the 64-bit case, 25~bits in the 32-bit case). We then round up or down based on the least significant bit to 53~bits (64-bit case) or to 24~bits (32-bit case). When rounding up, we might overflow to an additional bit if we have all ones: in such case we shift  to get back 53~bits (64-bit case) or 24~bits (32-bit case).
 
\begin{table}[tbh]
    \centering
     \rowcolors{2}{gray!25}{white}
    \begin{tabular}{ccc}
    \toprule
$q$ & reciprocal (64~msb) & reciprocal (next 64~msb)  \\
    \midrule     
 -40 & \texttt{8b61313bbabce2c6} & \texttt{2323ac4b3b3da015} \\
-39 & \texttt{ae397d8aa96c1b77} & \texttt{abec975e0a0d081a} \\
-38 & \texttt{d9c7dced53c72255} & \texttt{96e7bd358c904a21} \\
-37 & \texttt{881cea14545c7575} & \texttt{7e50d64177da2e54} \\
-36 & \texttt{aa242499697392d2} & \texttt{dde50bd1d5d0b9e9} \\
-35 & \texttt{d4ad2dbfc3d07787} & \texttt{955e4ec64b44e864} \\
-34 & \texttt{84ec3c97da624ab4} & \texttt{bd5af13bef0b113e} \\
-33 & \texttt{a6274bbdd0fadd61} & \texttt{ecb1ad8aeacdd58e} \\
-32 & \texttt{cfb11ead453994ba} & \texttt{67de18eda5814af2} \\
-31 & \texttt{81ceb32c4b43fcf4} & \texttt{80eacf948770ced7} \\
-30 & \texttt{a2425ff75e14fc31} & \texttt{a1258379a94d028d} \\
-29 & \texttt{cad2f7f5359a3b3e} & \texttt{96ee45813a04330} \\
-28 & \texttt{fd87b5f28300ca0d} & \texttt{8bca9d6e188853fc} \\
    \bottomrule
    \end{tabular}
    \caption{Values of the 128-bit reciprocals in hexadecimal form for negative exponents as two 64-bit words. The reciprocal is given by $\ceiling(\frac{2^{2b}}{5^{-q}})$ with $b=64+ \ceiling( \log_2 5^{-q})$.}
    \label{tab:awayfromone}
\end{table}
 
\begin{numberedexample}\label{remark:funcase}
Consider the case of the string \texttt{9.109e-31}. We parse it as $9109 \times 10^{-34}$.
We load up the most significant 64~bits of the reciprocal corresponding to $q=-34$ which is
\texttt{0x84ec3c97da624ab4} in hexadecimal form (see Table~\ref{tab:tabnearone}).
We normalize 9109 so that, as a 64-bit word, its most significant bit is 1: $9109\times 2^{50}$.
We multiply the two words to get that the most significant 64~bits of the
product are \texttt{49e6a7201cf62db0} whereas the next most significant 64~bits are
\texttt{0x5b10000000000000}. We stop the computation since the second word is
not filled with ones. 
The most significant bit of the product contains a 0.
We shift the most significant 64~bits by 9~bits to get 10400639386286870.
The least significant bit is zero so we round down to 5200319693143435 or
\texttt{0x1279a9c8073d8b} in hexadecimal form. We get that $9109 \times 10^{-34}$
is the floating-point number \texttt{0x1.279a9c8073d8bp-100}. See Example~\ref{example:expo} in \S~\ref{sec:expo} to learn how we determine that the binary exponent is -100.
\end{numberedexample}

\subsection{Subnormals}\label{sec:subnormals}
To represent values that are too small, the floating-point standard uses special values  called subnormals. Whenever we end up with a value  $m \times 2^{p}$ with $m\in [2^{52}, 2^{53})$ (64-bit case) or $m\in [2^{23}, 2^{24})$ (32-bit) but with $p$ too small, smaller than $-1022-52$ in the 64-bit case or
smaller than $-126-23$ in the 32-bit case, we fall back on the subnormal representation. It uses a small value for the exponent to represent values in the range $[2^{-1022-52}, 2^{-1022})$ (64-bit case) or in the range $[2^{-126-23}, 2^{-126})$ (32-bit case). The 
values are given by $m \times 2^{-1022-52}$ (64-bit) or  $m \times 2^{-126-23}$ (32-bit) while allowing $m$ to be any positive value no larger than $2^{52}$ or $2^{23}$. 

To construct the subnormal value, we take the original binary significand $m$ and we divide it by a power of two, with rounding. Thus, for example, if we are given the 64-bit value
 $(2^{53}-1) \times 2^{-1022-54}$, we observe that the power of two is too small ($-1022-53<-1022-52$) by exactly two. Thus we take the binary significand $ 2^{53}-1$
 and divide it by four, with rounding: we get $2^{51}$ and so we get the subnormal
 floating-point number $2^{51} \times  2^{-1022-52}$.
 Thankfully, rounding is relatively easy since we never need to handle the round-to-even case with subnormals, because it only occurs with  powers of exponents near zero.

We should be mindful that, in exceptional cases, the rounding process can lead us to find that we do not have a subnormal.
Indeed, consider the value
 $(2^{53}-1) \times 2^{-1022-53}$, its power of two is too small ($-1022-53<-1022-52$) by exactly one. We take the binary significand $ 2^{53}-1$
 and divide it by two, with rounding, getting $2^{52}$ and so we end up with the normal number $2^{52} \times 2^{-1022-52}$. 

 \section{Computing the Binary Exponent Efficiently}
\label{sec:expo}

We are approximating a decimal floating-point number $w \times 10^q $ with a binary 
floating-point number $m \times 2^p$. 
We must compute the binary exponent $p$.
Starting from the power of ten $10^q$, we want write it as a value in $[1,2)$, as prescribed by the floating-point standard, multiplied by a power of two. We have two distinct cases depending on the sign of $q$:
\begin{itemize}
\item when $q\geq 0$, we have    $10^q = 2^q \times 5^q =  \frac{5^q}{2^{\floor(\log_2 5^q)}}  \times 2^{q+ \floor(\log_2 5^q)}$,
\item when $q< 0$, we have   $10^q = 2^q \times 5^q = \frac{ 2^{\ceiling(\log_2 5^{-q})}}{ 5^{-q}}  \times 2^{q - \ceiling(\log_2 5^{-q})}$.
\end{itemize}
We can verify that both constraints are satisfied: $ 5^q/2^{\floor(\log_2 5^q)} \in [1,2)$ and
$2^{\ceiling(\log_2 5^{-q})}/ 5^{-q} \in [1,2)$. Hence we have that the binary powers
corresponding to the powers of ten are given by  $q + \floor (\log_2( 5^q )) = q - \ceiling(\log_2 5^{-q})$.
For example, we have that $10^5 = 5^5/2^{11 }\times 2^{16} =  1.52587890625 \times 2^{16}$ since  $ \floor( \log_2 5^5) = 11$.
Computing $q+\log_2( 5^q )$ could require an expensive iterative process. 
The decimal exponent $q$ is in  limited range of values, say $q \in (-400,350)$.
We have that $q+\log_2( 5^q )= q + q \log_2( 5) = q (1 + \log_2( 5))$ and $1 + \log_2( 5) \approx 217706 /2^{16}$. We can check that over the interval 
$q \in (-400,350)$, we have that $q + \floor (\log_2( 5^q )) = (217706 \times q) \div 2^{16} $ (exactly)  as one can verify numerically. 
 The division (by $2^{16}$) can be implemented as a logical shift. Thus we only require 
 a multiplication followed by a shift.
 We initially derived this efficient formula using a satisfiability-modulo-theories (SMT) solver~\cite{dutertre2014yices}.

In our algorithm, we normalize the decimal significand so that it is in $[2^{63}, 2^{64})$.  That is, given the string \texttt{1e12}, we first parse it as the decimal significand $w=1$ and the decimal exponent $q=12$. We then normalize $w=1$ to $w'=2^{63}$ (shifting it by 63~bits) and we proceed with the computation of the binary significand.
Had we started with $w=2^4$ (say), then we would have shifted by only $63-4$~bits and then the binary exponent must be incremented by 4. For example, using the input string \texttt{16e12} instead of \texttt{1e12}, we would have used the decimal significand $w= 16$ but still ended up with the normalized significand $w'=2^{63}$. Yet the binary exponent of 
\texttt{16e12} is clearly 4 more than the binary exponent of \texttt{1e12}. In other words, we need to take into account the number of leading zeroes of the decimal significand.
Thus we increment the binary exponent by $63-l$ where $l$ is the number of leading zeros of the original decimal significand $w$ as a 64-bit word.

For powers of ten, the product of the normalized significand with either the power of five or its reciprocal  has a leading zero since $2^{63} \times (2^{64}-1) < 2^{127} $. When the product is larger and  it overflows in the most significant bit, then the binary exponent must be 
incremented by one.
Thus we finally have the following formula
\begin{equation*}
 \left ( \left (217706 \times q\right ) \div 2^{16} \right) + 63 - l + u
\end{equation*}
where $u $ is the value of the most significant bit of the product (0 or 1) and where $l$ is the number of 
leading zeros of $w$. 
 
Furthermore, when we round up the resulting significand, it may sometimes overflow: e.g., if the most significant bits of the product are
all ones,  we overflow to a more significant bit and we need to shift  the result. In such cases, we increment the binary exponent by one.



 When serializing the exponent in the IEEE binary format, we need to add either 1023 (64-bit) or 127 (32-bit) to the exponent; these constants (1023 and 127) are sometimes called  \emph{exponent biases}. For example, the 64-bit binary exponent value from \num{-1022} to \num{1023} are stored as the unsigned integer values from \num{1} to \num{2046}. The serialized exponent value 0 is reserved for subnormal values while the serialized exponent value \num{2047} is reserved for non-finite values.

\begin{numberedexample}\label{example:expo} Consider again Example~\ref{remark:funcase}. We start from  $9109 \times 10^{-34}$.
Because $q<0$, we compute  $q - \ceiling(\log_2 5^{-q})$ and get -113.
We have that 9109 has 50~leading zeros as a 64-bit word and we normalize it as $9109\times 2^{50}$.
Thus we have $I=50$ and so we need to increment the binary exponent by $63-I$ or 13. We get a binary exponent of -100. We verify that the product has a leading zero bit so we have that the binary exponent must be -100.
\end{numberedexample}

 \section{Processing Long Numbers Quickly}
\label{sec:long}

In some uncommon instances, we may have a decimal significand that exceeds 19~digits.
Unfortunately, if we are given a value with superfluous digits, we cannot  truncate the digits: it may be necessary to read tens or even hundreds of digits (up to 768~digits in the worst case). Indeed, consider the second smallest 64-bit normal floating-point value: $2^{-1022} + 2^{-1074}$ ($\approx 2.2250738585072019\times 10^{-308}$) and the next smallest value $2^{-1022} + 2^{-1073}$ ($\approx 2.2250738585072024\times 10^{-308}$). If we pick a value that is exactly in-between ($2^{-1022} + 2^{-1074}+ 2^{-1075}$),
we need to break the tie by rounding to even (to the larger value $2.2250738585072024\times 10^{-308}$ in this case). Yet any truncation of the value would be slightly closer to the lower value ($\approx 2.2250738585072019\times 10^{-308}$). We can write $2^{-1022} + 2^{-1074}+ 2^{-1075}$ exactly as a decimal floating-point value $w \times 10 ^q$ for integers $w$ and $q$, but the significand requires 768~digits. We can show that it is the worst case.

When there are too many digits, we could immediately fall back on a higher-precision approach. However,  if we just use the most significant 19~digits, and truncate any subsequent digits, we might be able to uniquely identify the exact number. It is trivially the case
if the truncated digits are all zeros, in which case we can safely dismiss the zeros.
Otherwise, if $w$ is the truncated significand, then the exact value is in the interval $(w \times 10^q,(w+1) \times 10^q)$.
Thus we may  apply our algorithm to both $w \times 10^q$ and $(w+1) \times 10^q$. If they both round to the same binary floating-point number, then this floating-point number has to match exactly the true decimal value. If $w$ is limited to 19~digits, then $w+1\leq 10^{19}<2^{64}$ so we do not have to worry about possible overflows.

To assess the effectiveness of this approach, we can try a numerical experiment. We generate random 19-digit significands  and append an exponent (e.g., \texttt{1383425612993491676e-298} and
\texttt{1383425612993491677e-298}). We find that for such randomly generated values, about 99.8\% of the successive values map to the same 64-bit floating-point number, over a range of exponents (e.g., from \num{-300} to \num{300}). We can also generate random 64-bit numbers in the unit interval $[0,1]$, serialize them to 19~digits and add one to the last digit. We get that in about 99.7\% of all cases, changing the last digit does not affect the value. In other words,  we often can determine exactly a floating-point value after truncating to 19~digits in most cases.

When it fails, we can fall back on a higher-precision approach. In our software implementation (see \S~\ref{sec:fastalgo}), we adapted a 
general implementation used as part of the Go standard library. Given that it should be rarely needed, its performance is  secondary. However,
it has to be exact.

\section{Experiments}
\label{sec:experiments}
We implemented our algorithm and published it as an  open source software library.\footnote{\url{https://github.com/fastfloat/fast_float}} It closely follows the C++17 standard for the \texttt{std::from\_chars} functions, supporting both 64-bit and 32-bit floating-point numbers. It has been thoroughly tested. Though our code is written using generally efficient C++ patterns, we have not micro-optimized it. 
Our implementation requires a C++11-compliant compiler. It does not allocate memory on the heap and it does not throw exceptions.

To implement our algorithm, we use a precomputed table of powers of five and reciprocals, see Appendix~\ref{appendix:table}. Though it  uses   \SI{10}{\kibi\byte}, we should compare it with the original Gay's implementation of \texttt{strtod} in C which uses \SI{160}{\kibi\byte}  and compiles to tens of kilobytes. Our table is used for parsing both 64-bit and 32-bit numbers.

There are many  libraries that support number parsing. For our purposes, we limit ourselves to C++ production-quality libraries. We only consider libraries that offer exact parsing. See Table~\ref{tab:test-parsers}.
We choose to omit libraries written in other programming languages (Java, D, Rust, etc.) since direct comparisons between programming
languages are error prone---see Appendix~\ref{appendix:rust_results} for benchmarks of a Rust version of our algorithm.\footnote{The release notes for Go version~1.16, which makes use of our approach, state that ``ParseFloat now uses the [new] algorithm, improving performance by up to a factor of 2.'', \url{https://golang.org/doc/go1.16}. Our C++ code was also ported to C\#, \url{https://github.com/CarlVerret/csFastFloat}, and Java, \url{https://github.com/wrandelshofer/FastDoubleParser} with good results.} 
We also include in our benchmarks the system's C function \texttt{strtod}, configured with the default locale. Though the standard Linux C++ library supports the C++17 standard, it does not yet provide an implementation of the  \texttt{std::from\_chars} functions for floating-point numbers.

To ensure reproducibility, we publish our full benchmarking software.\footnote{\url{https://github.com/lemire/simple_fastfloat_benchmark}, git tag \texttt{v0.1.0}} Our benchmarking routine takes as input a long array of strings that are parsed in sequence. Somewhat arbitrarily, we seek to compute the minimum of all encountered numbers. Such a running-minimum function carries minimal overhead compared to number parsing. Hence, we effectively measure the throughput of number parsing.
 We are also careful to use datasets containing thousands of numbers for two reasons:
 \begin{itemize}
\item On the one hand,  all measures have a small bounded error: by using  large sequence of tests, we amortize such errors. 
\item On the other hand, the performance of modern processors is often closely related to its ability to predict branches. A single mispredicted branch can waste between 10 to 20~cycles of computations. When in a repeating loop, some recent processors can learn to predict with high accuracy a few thousands of branches~\cite{seznec2011new}.
 \end{itemize}

We repeat all experiments 100~times. We avoid memory allocations throughout the process. On such a computational benchmark, timings follow a distribution resembling a log-normal distribution with a long tail associated with noise (interrupts, cache competition, context switches, etc.) and a non-zero minimum. The median is located between the minimum and the average.  Using a common convention~\cite{langdale2019parsing}, we compute both the minimum time and the average time: the difference between the 
 two is our margin of error. If the minimum time and the average time are close,  our measures are reliable. We find that the error margin is consistently less than 5\% on all platforms---often under 1\%. 
 
 On Linux platforms, we can \emph{instrument} our benchmark so that we can programmatically track the number of cycles and number of instructions retired using CPU performance counters from within our own software. Such instrumentation is precise (i.e., not the result of sampling) and does not add overhead to the execution of the code.
 Typically, the number of instructions retired by a given routine varies little from run to run and may be considered exact, especially given that we ensure a stable number of branch mispredictions.
 One benefit of instrumented code is that we can measure the effective CPU clock frequency during the benchmarked code: modern processors adjust their frequency dynamically based on load, power usage and heat. Our Linux systems are configured for performance and we observe the expected CPU frequencies.

Our benchmarks exclude disk access or memory allocation: strings are preallocated once.
To ensure a consistent  and reproducible system configuration, we run our benchmark under a privileged docker environment based on a Ubuntu 20.10 image.\footnote{\url{https://github.com/lemire/docker_programming_station}, git tag \texttt{v0.1.0}} According to our tests, the docker overhead  for purely computational tasks when the host is itself  Linux, is  negligible.
For our benchmarks, we use the GNU GCC~10.2 compiler with full optimization (\texttt{-O3 -DNDEBUG}) under Linux. Our benchmark
programs are single binaries applying the different parsing functions to the same strings.
Though we access megabytes of memory,  most of the data remains in the last-level CPU cache. We are not limited by cache or memory performance.

We rely a realistic data source that is used by the Go developers to benchmark the standard library: the canada  dataset comes from 
 a JSON file commonly used for benchmarking~\cite{langdale2019parsing}. It contains 111k  64-bit floating-point numbers serialized as strings. The canada number strings are part of geographic coordinates: e.g., \texttt{83.109421000000111}. We also include synthetic datasets containing 100k numbers each. The uniform dataset is made of 64-bit random numbers in the unit interval $[0,1]$. The integer data set is made of randomly generated 32-bit integers. Though it is inefficient to use a floating-point number parser for integer values, we believe that it might be an interesting test case. It is an instance where our code fails to show large benefits. For the synthetic dataset, we considered two subcases: the floating-point number can either be serialized using a fixed decimal significand (17~digits) or using a minimal decimal significand as 64-bit numbers (using at most 17~digits~\cite{10.1145/1806596.1806623}). We found relatively little difference in performance (no more than 10\%) on a per-float basis between these two cases. In both cases, the serialization is exact: an exact 64-bit  parser should recover exactly the original floating-point value. We present our results with the concise serialization.

\begin{table*}
\caption{\label{tab:test-parsers} Production-quality number parsing C++ libraries. Both  double-conversion and abseil have been authored by Google engineers.
}
\centering\footnotesize
\begin{minipage}{\textwidth}
\centering
\begin{tabular}{lll}\toprule
Processor   & snapshot   & link  \\ \midrule
Gay's  \texttt{strtod} (netlib) & 2001 & \texttt{www.netlib.org/fp/}   \\
 double-conversion   & version 3.1.5 & \texttt{github.com/google/double-conversion.git}   \\
abseil & 20200225.2 & \texttt{github.com/abseil/abseil-cpp} \\
\bottomrule
\end{tabular}
\end{minipage}
\end{table*}

To better assess our algorithm, we tested it on a wide range of Linux-based systems which include x64 processors, an ARM server processor and an IBM POWER9 processor. See \S~\ref{tab:test-cpus}. We report the effective frequency, that is, the CPU frequency measured during the execution of our code. Our experiments are  single-threaded: the Ampere system contains 32~ARM cores and would normally be competitive against the other systems if all cores were used. However, on a single-core basis, it is not expected to match
the other processors.

\begin{table*}[!tbh]
\caption{\label{tab:test-cpus} Systems tested 
}
\centering\footnotesize
\begin{minipage}{\textwidth}
\centering
\begin{tabular}{ccccc}\toprule
Processor    & Effective Frequency  & Microarchitecture                             & Compiler\\ \midrule
Intel i7-6700  & \SI{3.7}{\GHz} & Skylake (x64, 2015) & GCC 10.2   \\
 AMD EPYC 7262& \SI{3.39}{\GHz} & Zen~2 (x64, 2019) &  GCC  10.2 \\
Ampere & \SI{3.2}{\GHz} & ARM Skylark (aarch64, 2018) &  GCC  10.2 \\
IBM & \SI{3.77}{\GHz} & POWER9 (ppc64le, 2018) &  GCC  10.2 \\

\bottomrule
\end{tabular}
\end{minipage}
\end{table*}

We report the speed in millions of numbers per second for our different datasets and different processors in Table~\ref{tab:mfloats}. We find that the \texttt{from\_chars} function
in the  abseil library is often
superior to Gay's implementation of \texttt{strtod} (labeled as netlib) which is itself superior to both double-conversion and the  \texttt{strtod} function including the GNU standard
library. The implementation notes of the abseil library~\cite{abseil} indicate that it relies on a general strategy which is not fundamentally different from our own.\footnote{The abseil library does not rely on  Clinger's fast path when parsing numbers. It also uses less accurate product computation.}
Even so, our approach is generally twice as fast as the abseil library and up to five times faster than what the standard library offers.
We find that for the integer test, netlib is superior to all other alternatives (including abseil) except for our own. The gap between our approach and netlib when parsing integers is modest (about 20\%).
Overall, our proposed approach is three to five times faster than the \texttt{strtod} function available in the GNU standard library. And it is often more than
twice as fast as the state-of-the-art abseil library.

\begin{table*}[!tb]\centering
\caption{\label{tab:mfloats} Millions of 64-bit floating-point numbers parsed per second under different processor architectures
}
\setlength{\tabcolsep}{5pt}
\subfloat[Intel Skylake (x64) ]{\footnotesize
\begin{tabular}{cccc}\toprule
                      &   canada    &      uniform    & integer\\ \midrule
netlib                 &   9.6     &          10    &    48 \\
d.-conversion      &   9.4    &         10     &    18\\
strtod                 &     9.0 &       9.4  &    20\\
abseil                 &    18  &       19  &    27\\ 
our parser             & 45    &         45 &     61\\
\bottomrule
\end{tabular}
}
\subfloat[AMD Zen~2 (x64)]{\footnotesize
\begin{tabular}{ccc}\toprule
                         canada        & uniform   &  integer\\ \midrule
10        &     11    &    57 \\
9.0    &      9.9   &   24 \\
9.3       &      9.9   &   18 \\
21 &                21      &   30 \\
51   &      52    &    70 \\
\bottomrule
\end{tabular}
}\\
\subfloat[Ampere Skylark (ARM, aarch64)]{\footnotesize
\begin{tabular}{cccc}\toprule
                        & canada        &  uniform   &  integer\\ \midrule
netlib                  &  8.1        &    8.7    &     23 \\
d.-conversion       &  5.4          &    5.8   &     12 \\ 
strtod                  &   3.9    &   4.2        &   8.7 \\
abseil                  &  9.1    &  9.4        &   13 \\ 
our parser              & 22      &      21     &    26 \\
\bottomrule
\end{tabular}
}
\subfloat[IBM POWER 9]{\footnotesize
\begin{tabular}{ccc}\toprule
                         canada &      uniform    & integer\\ \midrule
9.0     &   10       &    39 \\
5.8       &   6.4      &   18 \\
4.8       &   5.3      &   12 \\
12       &   12       &   17 \\ 
42        &    39     & 46 \\
\bottomrule
\end{tabular}
}

\end{table*}

To understand our good results,  we look and the number of instructions and cycles per number for one representative dataset (uniform) 
and for the AMD Zen~2 processor. See Table~\ref{tab:mfloatsamd}. As expected, we use half as many instructions on average as the
abseil library. We find interesting that we use only about three times fewer instructions than the \texttt{strtod} function, but
5.6 times fewer cycles. Our approach causes almost no branch mispredictions, in contrast with Gay's netlib library.
Similarly, while we retire 4.2~instructions per cycle, Gay's netlib library is limited at 2.2~instructions per cycle.
To summarize, our approach uses fewer instructions, generates fewer branch mispredictions and retires more instructions per cycle.

To identify our bottleneck, we run the parsing routine while skipping the conversion from a decimal significand
and exponent to the standard decimal form. Instead, we sum the decimal significand and the exponent and return
the result as a simulated floating-point value. We find that we save only about a quarter of the number of instructions
and a quarter of the time (cycles). In other words, our decimal-to-binary routine is so efficient that it only
uses about a quarter of our computational time. Most of the time goes into parsing the input string and converting it
to a decimal significand and exponent.

 \begin{table}[!tb]\centering
\caption{\label{tab:mfloatsamd} Instructions, mispredicted branches and cycles per 64-bit floating-point number in the uniform model on the
AMD Zen~2 processor. We also provide the number of instructions per cycle. The ``just string'' row corresponds to our parser but without
the final decimal to binary conversion.}
\setlength{\tabcolsep}{5pt}

\begin{tabular}{ccccc}\toprule
                      &    Instructions & mispredictions & cycles & instructions/cycle\\ \midrule
netlib                &    740                &  4.1       &     330      &     2.2 \\
double-conversion     &    1100               &  1.7       &    380       &     3.0 \\
strtod                &    1100               &  0.7      &     370       &     3.0 \\
abseil                &    600               &   0.5     &       160      &     3.8 \\ 
our parser            &    280              & 0.01       &    66          &     4.2 \\
(just string) &    215              & 0.00       &    46         &     4.7 \\
\bottomrule
\end{tabular}
\end{table}

Our results using 32-bit numbers are similar. To ease comparison, we produce exactly the same numbers strings as in the 64-bit case. We replace the \texttt{strtod} function with the equivalent \texttt{strtof} function.
We present the result in Table~\ref{tab:mfloatsamd32}. It suggests that there is little speed benefit in reading numbers as 32-bit floating-point numbers instead of 64-bit floating-point numbers given the same input strings. The result does not surprise us given that we rely on the same algorithm.

 \begin{table}[!tb]\centering
\caption{\label{tab:mfloatsamd32} Instructions, mispredicted branches and cycles per 32-bit floating-point number in the uniform model on the
AMD Zen~2 processor. We also provide the number of instructions per cycle.}
\setlength{\tabcolsep}{5pt}

\begin{tabular}{ccccc}\toprule
                      &    Instructions & mispredictions & cycles & instructions/cycle\\ \midrule
strtof                &    1100               &  0.7      &     350       &     3.1 \\
abseil                &    600               &   0.5     &       170      &     3.6 \\ 
our parser            &    280              & 0.00       &    64         &     4.3 \\
\bottomrule
\end{tabular}
\end{table}

We find it interesting to represent the parsing speed in terms of bytes per second. 
On the canada dataset using the AMD~Zen2 system, our parser exceeds \SI{1}{\gibi\byte\per\second} (\SI{1080}{\mebi\byte\per\second}). It is $2.5$~times faster than the fastest competitor (abseil) and $5$~times faster than the other parser. See Fig.~\ref{fig:serializedspeed}.
For the synthetic dataset, we use the
concise number serialization instead of relying on a fixed number of digits, to avoid
overestimating the parsing speed. Our parser runs at almost over  \SI{900}{\mebi\byte\per\second} compared to less than \SI{200}{\mebi\byte\per\second} for the \texttt{strtod} function. If we serialize the
numbers so that they use a fixed number of digits (17), we reach higher speeds: our parser exceeds \SI{1}{\gibi\byte\per\second} (not shown).

\begin{filecontents}{parsingspeedcanada.dat}
Wert    label  value  
1   netlib   212.98
2   d.-conv.    187.52
3   strtod     194.13
4   abseil  429.17
5   this~paper     1079.17
\end{filecontents}
\begin{filecontents}{parsingspeed.dat}
Wert    label  value  
1   netlib   218.56
2   d.-conv.   175.88
3   strtod   171.84
4   abseil  371.24
5   this~paper    910.97
\end{filecontents}
\begin{figure}\centering
\subfloat[canada]{
\begin{tikzpicture}[scale = 0.7]
\begin{axis}[ 
    ybar,
     bar width=25pt,
    xtick={1,...,5},
    xticklabels from table = {parsingspeedcanada.dat}{label},
    x tick label style = {rotate=0},
    grid=major,
        ymax=1100,
        ymin=0,
    xtick align=inside,
    every axis x label/.style={at={(ticklabel cs:0.5)},anchor=near ticklabel},
    ylabel=throughput (\si{\mebi\byte\per\second}),
    axis lines*=left,
    every axis y label/.style={at={(ticklabel cs:0.5)},rotate=90,anchor=near ticklabel}
]

\addplot[ybar,fill=blue] table [ 
    x=Wert, 
    y=value,
] {parsingspeedcanada.dat} ;

\end{axis} 
\end{tikzpicture} 
}
\subfloat[uniform]{
\begin{tikzpicture}[scale = 0.7]
\begin{axis}[ 
    ybar,
     bar width=25pt,
    xtick={1,...,5},
    xticklabels from table = {parsingspeed.dat}{label},
    x tick label style = {rotate=0},
    grid=major,
        ymax=1100,
        ymin=0,
    xtick align=inside,
    every axis x label/.style={at={(ticklabel cs:0.5)},anchor=near ticklabel},
    ylabel=throughput (\si{\mebi\byte\per\second}),
    axis lines*=left,
    every axis y label/.style={at={(ticklabel cs:0.5)},rotate=90,anchor=near ticklabel}
]

\addplot[ybar,fill=blue] table [ 
    x=Wert, 
    y=value,
] {parsingspeed.dat} ;

\end{axis} 
\end{tikzpicture} 
}
\caption{\label{fig:serializedspeed} Parsing speed for the canada dataset and for random 64-bit floating-point number in the uniform model, serialized concisely, on the
AMD Zen~2 processor.}
\end{figure}

In Table~\ref{tab:percentage}, we provide statistics regarding which code paths are used by different datasets.
The integer dataset is entirely covered by Clinger's fast path. It explains why our performance on this dataset
is similar to the netlib approach, since we rely on essentially the same algorithm. For both the canada
and uniform dataset, most of the processing falls on our parser as opposed to Clinger's fast path.
After initially parsing the input string, 
our fast algorithm begins with one or two multiplications between the decimal significand and looked up table values.
We observe that a single multiplication is all that is necessary in most cases. In our experiments, we never need
to fall back on a higher-precision approach.

 \begin{table}[!tbh]\centering
\caption{\label{tab:percentage} Code path frequencies for different datasets using our parser. The percentages are relative to the number of input number strings.}

\begin{tabular}{p{0.4\textwidth}ccc}\toprule
                      &   canada    &        uniform    & integer\\ \midrule
Clinger's fast path  &   8.8\%     &        0\%    &    100\% \\
our path             &      91.2\% &         100\%       &    0\% \\
two multiplications   &     0.6\%  &       0.66\%     &    0\%\\
\bottomrule
\end{tabular}
\end{table}

\paragraph{Many digits} We designed our algorithm for the scenario where numbers are serialized to strings using no more than 17~digits. However, we can not always ensure that such a reasonable limit is respected. To test the case where we have many more than 17~digits, we create big integer values by serializing three randomly selected 64-bit integers in sequence. On the AMD Zen~2 system, we find that our parser exceeds \SI{1100}{\mebi\byte\per\second}. The abseil library achieves similar speeds \SI{910}{\mebi\byte\per\second} which is more than twice as fast as Gay's netlib \SI{390}{\mebi\byte\per\second}. The \texttt{strtod} function is limited to \SI{110}{\mebi\byte\per\second}.

\paragraph{Visual Studio} Unfortunately, we are not aware of a standard implementation of the \texttt{from\_chars} function under Linux. However,
Microsoft provides one such fast function as part of its Visual Studio~2019 system. We use the latest
available Microsoft C++ compiler (19.26.28806 for x64). We compile in release mode with the flags \texttt{/O2 /Ob2 /DNDEBUG}.
These results under Windows are generally comparable to our Linux results. See Table~\ref{tab:windows}. Microsoft's \texttt{from\_chars} function is faster
than its \texttt{strtod} function. However, our parser is several times faster than Microsoft's \texttt{from\_chars} function.

\begin{table*}[!tb]\centering
\caption{\label{tab:windows} Millions of 64-bit floating-point numbers parsed per second under a  \SI{4.2}{\GHz}  Intel~7700K processor using Visual Studio~2019
}
\begin{tabular}{cccc}\toprule
                      &   canada    &      uniform    & integer\\ \midrule
netlib                 &  20     &          18    &    48 \\
d.-conversion      &       10    &         10     &    18\\
strtod                 &   6.0 &         5.8  &    15\\
from\_chars             &  6.7  &         7.2  &    22\\
abseil                 &  16  &         15  &    22\\   
our parser             &  37    &         48 &     60\\
\bottomrule
\end{tabular}

\end{table*}

\paragraph{Apple M1~Processor} 
In November~2020, Apple released laptops with a novel ARM 3.2\,GHz processor (M1). 
The M1 processor has 8~instruction decoders compared to only 4~decoders on most x64 processors.
Though we would normally avoid benchmarking on a laptop due to potential frequency throttling, we found consistent run-to-run results (within 1\%) and a low margin of error (within 1\%).
We compiled our benchmark software on such a laptop using Apple's LLVM clang compiler (Apple clang version 12.0.0 using the flags \texttt{-O3 -DNDEBUG}). We present our throughput results  in Fig.~\ref{fig:serializedspeedm1}.  Our parser reaches \SI{1.5}{\gibi\byte\per\second} on the uniform dataset. On the Apple platform, the \texttt{strtod} function is several times slower than any other number parser. Other parsers (netlib, double-conversion and abseil) are about three times slower in these tests.

\begin{filecontents}{m1parsingspeed.dat}
Wert    label  value  
1   netlib    356.06
2   d.-conv.   308.55
3   strtod    105.85
4   abseil  454.08
5   this~paper    1485.61
\end{filecontents}

\begin{filecontents}{m1parsingspeedcanada.dat}
Wert    label  value  
1   netlib   368.92
2   d.-conv.   366.84
3   strtod        98.91
4   abseil   481.93
5   this~paper     1279.11
\end{filecontents}
\begin{figure}\centering
\subfloat[canada]{
\begin{tikzpicture}[scale = 0.7]
\begin{axis}[ 
    ybar,
     bar width=25pt,
    xtick={1,...,5},
    xticklabels from table = {m1parsingspeedcanada.dat}{label},
    x tick label style = {rotate=0},
    grid=major,
        ymax=1500,
        ymin=0,
    xtick align=inside,
    every axis x label/.style={at={(ticklabel cs:0.5)},anchor=near ticklabel},
    ylabel=throughput (\si{\mebi\byte\per\second}),
    axis lines*=left,
    every axis y label/.style={at={(ticklabel cs:0.5)},rotate=90,anchor=near ticklabel}
]

\addplot[ybar,fill=blue] table [ 
    x=Wert, 
    y=value,
] {m1parsingspeedcanada.dat} ;

\end{axis} 
\end{tikzpicture} 
}
\subfloat[uniform]{
\begin{tikzpicture}[scale = 0.7]
\begin{axis}[ 
    ybar,
     bar width=25pt,
    xtick={1,...,5},
    xticklabels from table = {m1parsingspeed.dat}{label},
    x tick label style = {rotate=0},
    grid=major,
        ymax=1500,
        ymin=0,
    xtick align=inside,
    every axis x label/.style={at={(ticklabel cs:0.5)},anchor=near ticklabel},
    ylabel=throughput (\si{\mebi\byte\per\second}),
    axis lines*=left,
    every axis y label/.style={at={(ticklabel cs:0.5)},rotate=90,anchor=near ticklabel}
]

\addplot[ybar,fill=blue] table [ 
    x=Wert, 
    y=value,
] {m1parsingspeed.dat} ;

\end{axis} 
\end{tikzpicture} 
}
\caption{\label{fig:serializedspeedm1} Parsing speed for the canada dataset and for random 64-bit floating-point number in the uniform model, serialized concisely, on the
Apple~M1 processor.}
\end{figure}

\section{Conclusion}

Parsing floating-point numbers from strings is a fundamental operation supported by the
standard library of almost all programming languages. Our results suggest that widely used implementations
might be several times slower than needed on modern 64-bit processors. When the input strings are retrieved from
disks or networks with gigabytes per second in bandwidth, a faster approach should be beneficial.

We expect that more gains are possible mostly in how we parse the input strings into a decimal significand and exponent. 
For example, we could use advanced processor instructions such as SIMD instructions~\cite{langdale2019parsing}. 


It also be possible to accelerate the processing by relaxing correctness conditions: e.g., the parsing could be only exact up to an error in the last digit. However, we should be mindful of the potential problems that arise when different software components parse the same numbers to different binary values.

Floating-point numbers may be stored in binary form and accessed directly without parsing. However, some engineers
prefer to rely on text formats. Hexadecimal floating-point numbers (Appendix~\ref{sec:hexfloat}) may provide a convenient alternative for greater speed in such cases.



\section*{Acknowledgements}

Our work benefited especially from exchanges with 
M.~Eisel who motivated the original research with his key insights. We thank N.~Tao who provided invaluable feedback and who contributed an earlier and simpler
version of this algorithm to the Go standard library.  
Our fallback implementation includes code adapted from Google Wuffs, a memory-safe programming language, which was published  under the Apache 2.0 license. 
To our knowledge, the fast path for long numbers was first implemented  by R.~Oudompheng for the Go standard library.
We thank A.~Milovidov for his feedback regarding benchmarking.
We are grateful to W.~Mu\l{}a for his thorough review of an early manuscript: his comments helped us improve the document significantly.
We thank I.~Smirnov for his feedback on benchmarking statistics.
We thank P.~Cawley for his feedback on the manuscript.
\bibliography{fastfloat}

\appendix
\section{Multiplicative Inverses}

Given an odd 64-bit integer $x$, there is a
unique integer $y$ such that $x \times y \bmod 2^{64} = 1$. We refer to $y$ as the \emph{multiplicative inverse} of $x$~\cite{dumas2013newton}.
Fig.~\ref{fig:multiinverse} presents an efficient C++ function to compute the multiplicative inverse of 64-bit odd integers. It relies on five successive calls to a function involving two integer multiplications.
\begin{figure}
\lstset{escapechar=@,style=customc}
\begin{lstlisting}
uint64_t f64(uint64_t x, uint64_t y) {  return  y * ( 2 - y * x ); }
uint64_t findInverse64(uint64_t x) {
   uint64_t y = x; y = f64(x,y); y = f64(x,y); y = f64(x,y);
   y = f64(x,y); y = f64(x,y); return y;
}
\end{lstlisting}
\caption{C++ function (\texttt{findInverse64}) to compute the multiplicative inverse of an odd 64-bit integer using Newton's method\label{fig:multiinverse}}
\end{figure}

\section{Table Generation Script}
\label{appendix:table}
Fig.~\ref{fig:python} provides a convenient Python script to general all relevant reciprocal and normalized powers of five. In practice, each 128-bit value may be stored as two 64-bit words.

\begin{figure}[hbt]
\lstset{escapechar=@,style=custompython}
\begin{lstlisting}
for q in range(-342,-27):
    power5 = 5**-q
    z = 0
    while( (1<<z) < power5) : z += 1
    b = 2 * z + 2 * 64
    c = 2 ** b // power5 + 1
    while(c >= (1<<128)): c //= 2
    print(c)    

for q in range(-27,0):
    power5 = 5**-q
    z = 0
    while( (1<<z) < power5) : z += 1
    b = z + 127
    c = 2 ** b // power5 + 1
    print(c)
    
for q in range(0,308+1):
    power5 = 5**q
    while(power5 < (1<<127)) : power5 *= 2
    while(power5 >= (1<<128)): power5 //= 2
    print(power5)
\end{lstlisting}
\caption{Python script to print out all 128-bit reciprocals ($q\in [-342,0)$) and all 128-bit truncated powers of five ($q \in [0,308)$).\label{fig:python}}
\end{figure}

\section{Hexadecimal Floating-Point Numbers}
\label{sec:hexfloat}

It could be  convenient to represent floating-point numbers using the  hexadecimal floating-point notation. The hexadecimal notation may provide an exact ASCII string representation of the binary floating-point number. It makes it relatively easy to provide an unambiguous string that should always be parsed to the same binary value. Furthermore, the parsing and serialization speeds could be much higher. The main downsides are that human beings may find such strings harder to understand and that they are not natively supported in all mainstream programming languages.

The  hexadecimal floating-point notation is supported in the C (C99), C++ (C++17), Swift, Java, Julia and Go programming languages.
As in the usual hexadecimal notation for integers,
we start the string with \texttt{0x} followed by the significand in hexadecimal form. Each hexadecimal character (0--9, A--F) represents 4~bits (a \emph{nibble}).  Instead of writing the exponential part in full (e.g., $\times 2^{4}$ or  $\times 2^{-4}$), we append the suffix \texttt{p} followed by the exponent (e.g., \texttt{p4} or \texttt{p-4}). 
Optionally, we can add an hexadecimal point in the significand. With a decimal point, we interpret the decimal fraction by dividing it by the appropriate power of ten. E.g, we write $1.45= 145 / 10^2$. The hexadecimal point works similarly. Thus \texttt{0x1.FCp17} means $\mathtt{0x1FC} / 16^2 \times 2^{17}$ or \num{260096} where we divide $\mathtt{0x1FC}$ by  $16^2$ because there are
two nibbles after the binary point. 
When the value is a normal 64-bit floating-point number, the significand can be expressed as a most significant 1 followed by up to 52~bits, or 13~hexadecimal character. Thus 
\num{9000000000000000} can be written as \texttt{0x1.ff973cafa8p+52}. The mass of the Earth in kilogram ($5.972\times 10^{24}$) is \texttt{0x1.3c27b13272fb6p+82}.

\section{String-Parsing Functions in C++}
\label{appendix:parsing-string}

Fig.~\ref{fig:parsestring} illustrates the computation of the decimal significand with pseudo-C++ code. We omit the code necessary to check whether there are leading spaces and sign characters (\texttt{+} or  \texttt{-}) and other error checks. We must further parse an eventual exponent preceded by the characters \texttt{e} or  \texttt{E}. Moreover, we must also check whether we had more than 19~digits in the decimal significand. Thus our actual code is slightly more complex.
Fig.~\ref{fig:pseudocppswar} presents SWAR~\cite{fisher1998compiling} functions to  check all at once whether a sequence of 8~digits is available and to compute the corresponding decimal integer.

\begin{figure}[!tbh]
\lstset{escapechar=@,style=customc}
\begin{lstlisting}
  const char *p = // points at the beginning of the string
  const char *pend = // points at the end of the string
  int64_t exponent = 0; // exponent
  uint64_t i = 0; // significand 
  while ((p != pend) && is_integer(*p)) {  i = 10 * i + uint64_t(*p - '0');  ++p; }
  if ((p != pend) && (*p == '.')) {
    ++p; const char *first_after_period = p;
    if ((p + 8 <= pend) && is_made_of_eight_digits(p)) {
      i = i * 100000000 + parse_eight_digits(p); p += 8;
      if ((p + 8 <= pend) && is_made_of_eight_digits(p))
      {  i = i * 100000000 + parse_eight_digits(p); p += 8; }
    }
    while ((p != pend) && is_integer(*p)) { uint8_t digit = uint8_t(*p - '0'); ++p; 
    i = i * 10 + digit; }
    exponent = first_after_period - p;
  }
\end{lstlisting}
\caption{Simplified pseudo-C++ code to compute the decimal significand from an ASCII string\label{fig:parsestring}}
\end{figure}

\begin{figure}[!tbh]
    \centering
\lstset{escapechar=@,style=customc}
\begin{lstlisting}
bool is_made_of_eight_digits(const char *chars) {
  uint64_t val; memcpy(&val, chars, 8);
  return !((((val + 0x4646464646464646)  | (val - 0x3030303030303030)) 
       & 0x8080808080808080)); 
}

uint32_t parse_eight_digits(const char *chars)  {
  uint64_t val; memcpy(&val, chars, sizeof(uint64_t));
  val = (val & 0x0F0F0F0F0F0F0F0F) * 2561 >> 8;
  val = (val & 0x00FF00FF00FF00FF) * 6553601 >> 16;
  return uint32_t((val & 0x0000FFFF0000FFFF) * 42949672960001 >> 32);
}

\end{lstlisting}
\caption{C++ functions to check whether 8~ASCII characters are made of digits, and to convert them to an integer value under a little-endian system\label{fig:pseudocppswar}}
\end{figure}

\section{Benchmarks in Rust}
\label{appendix:rust_results}

Our C++ implementation and benchmarks have been ported to Rust by 
I.~Smirnov.\footnote{\url{https://github.com/aldanor/fast-float-rust}} Unlike our C++ implementation, it does not attempt to skip leading white spaces, but there are otherwise few differences.  This Rust port allows us to compare against 
a popular Rust number processing library  (lexical\footnote{\url{https://docs.rs/lexical/5.2.0/lexical/}---v5.2.0}) as well as the standard Rust library (\texttt{from\_str}). Using Rust~1.49 on our AMD~Rome (Zen~2) system, we get the following results on the canada dataset: the standard Rust library is limited to \SI{92}{\mebi\byte\per\second}, lexical library achieves 
\SI{280}{\mebi\byte\per\second} while the Rust
port of our library achieves \SI{670}{\mebi\byte\per\second}.
On the Apple~M1 system, we get \SI{130}{\mebi\byte\per\second} (standard library),
\SI{370}{\mebi\byte\per\second} (lexical) and 
\SI{1200}{\mebi\byte\per\second} (Rust port).
The tests are repeated 1000~times and the difference between the best speed and the median speed is low on our test systems (less than 1\%).

\end{document}